\declaretheoremstyle[
    shaded={bgcolor=\color{rgb}{0.9,0.9,0.9}}  % comment this line in/out
]{theorem}
\newtheorem{lemma}{Lemma}
\newtheorem{thm}{Theorem}
\newtheorem{cor}{Corollary}
\newtheorem{ass}{Assumption}
\newcounter{mynotes}
\newcommand{\mnote}[1]{}
\newcommand{\D}{\mathcal{D}}
\newcommand{\ip}[2]{\langle #1, #2\rangle}
\newcommand{\ipbig}[2]{\bigg\langle #1\,,\, #2\bigg\rangle}
\newcommand{\R}{\mathbb{R}}
\newcommand{\E}{\mathbb{E}}
\newcommand{\F}{\mathcal{F}}
\newcommand{\OPT}{\textsc{OPT}\xspace}
\newcommand{\vOPT}{\vv{\textsc{OPT}}}
\newcommand{\ones}{\bm{1}}
\newcommand{\OCOalg}{\textsc{ss-ftrl}\xspace}
\renewcommand{\d}{\textrm{d}}
\newcommand{\ocp}{\textsc{OnlineCvx}\xspace}
\newcommand{\welf}{\textsc{OnlineWelfare}\xspace}
\newcommand{\mixed}{\textsc{Mixed}\xspace}
\newcommand{\e}{\varepsilon}
\newcommand{\alg}{\textsc{Alg}\xspace}
\newcommand{\algw}{\textsc{wAlg}\xspace}
\newcommand{\vOPTr}{\vOPT_{Stoch}^{\,rec}}
\newcommand{\wtilde}[1]{\widetilde{#1}}
\newcommand{\blue}[1]{{\color{blue} #1}}
\DeclareMathOperator{\argmax}{argmax}
\DeclareMathOperator{\argmin}{argmin}
\title{Robust Algorithms for Online Convex Problems \\ via Primal-Dual}
\author{Marco Molinaro}
\date{}
\begin{document}
	\maketitle

	\begin{abstract}
	
		The importance of primal-dual methods in online optimization can hardly be overstated, and they give several of the state-of-the art results in both of the most common models for online algorithms: the \emph{adversarial} and the \emph{stochastic/random order} models. Here we try to provide a more unified analysis of primal-dual algorithms to better understand the mechanisms behind this important method. With this we are able of recover and extend in one goal several results of the literature. 
		
		In particular we obtain \textbf{robust} online algorithm for fairly general online convex problems: we consider the \mixed model where in some of the time steps the data is stochastic and in the others the data is adversarial. Both the quantity and location of the adversarial time steps are unknown to the algorithm. The guarantees of our algorithms interpolate between the (close to) best guarantees for each of the pure models. In particular, the presence of adversarial times does not degrade the guarantee relative to the stochastic part of the instance. 
		
		More concretely, we first consider \emph{online convex programming}: in each time step a feasible set $V_t$ is revealed, and the algorithm needs to
select $v_t \in V_t$ to minimize the total cost $\psi(\sum_t v_t)$, for a convex function $\psi$. Our robust primal-dual algorithm for this problem on the \mixed model recovers and extends, for example, a result of Gupta et al.~\cite{raviConvexPD} as well as the recent work on $\ell_p$-norm load balancing~\cite{molinaro}. We also consider the problem of \emph{welfare maximization with convex production costs}: in each time a customer presents a value $c_t$ and resource consumption vector $a_t$, and the goal is to fractionally select customers to maximize the profit $\sum_t c_t x_t - \psi(\sum_t a_t x_t)$. Our robust primal-dual algorithm for this problem on the \mixed model recovers and extends the result of Azar et al.~\cite{cvxPDFOCS}.

	Given the ubiquity of primal-dual algorithms, we hope that the ideas of the analyses presented here will be useful in obtaining other robust algorithm in the \mixed or related models. 

	\end{abstract}

	\section{Introduction}

	The importance of primal-dual methods in online optimization can hardly be overstated. For example, several of the results and applications of Online Learning can be seen as solving a convex-concave game using a primal-dual procedure. In online algorithms, the focus of this work, primal-dual algorithms give several of the state-of-the art results in both of the most common online models: the \emph{adversarial} and the \emph{stochastic/random order} models. 
	
	Aiming at a better understand the mechanisms behind this important method, we provide in this paper a conceptually simpler and more unified analysis of primal-dual algorithms for fairly general convex online problems, and are able of recover and extend in one goal several results of the literature. 
	
	As a consequence of this unified analysis, we also obtain new \textbf{robust} online algorithms by considering the \mixed online model that interpolates between the adversarial and stochastic models: in some time steps the data revealed online is stochastic, drawn from an unknown distribution, and in the other time steps the data is adversarial (we will clarify details shortly). Both the quantity and location of the adversarial time steps are unknown to the algorithm. We obtain robust primal-dual algorithms in the \mixed model whose costs/profits are comparable to $\alpha \OPT_{Stoch} + \beta \OPT_{Adv}$, where $\OPT_{Adv}$ is the optimal offline solution over the adversarial part of the input, and $\OPT_{Stoch}$ is defined similarly, and $\alpha,\beta$ are approximation factors that match or almost match the best known ones for the respective pure models. Thus, these algorithms are robust: the presence of adversarial data does not destroy the stronger stochastic guarantee.
	
	While our focus is on better understanding how to think about and analyze primal-dual algorithms, we believe that these new robust algorithms are interesting in their own right, and bridging between the (optimistic) stochastic and (pessimistic) adversarial models has been a topic of significant interest in both online algorithms~\cite{meyerson,adSim,welfareSim,molinaro,kesselheimMolinaro,guptaSahilRobust,KKN15,mixedModelSpikes} and online learning (see \cite{guptaCOLT19} and references within). Many of the classical algorithms for sequential decision making are quite brittle and heavily dependent on the model; for example, it is easy to see that the classic threshold-based algorithm for the Secretary Problem obtains an arbitrarily bad solution even in the presence of a single adversarial item (see~\cite{kesselheimMolinaro,guptaSahilRobust} for further discussion). This highlights the importance of algorithm that are robust by design, and indicates that this may be an important factor that contributes to practical performance. Finally, given the ubiquity of primal-dual algorithms, we hope that the ideas of the analyses presented here will be useful in obtaining other robust algorithm in the \mixed or related models.

%###########################################################
%###########################################################
%###########################################################	
	
	\subsection{Our results} \label{sec:results}
	
%	 We present algorithms for fairly general online convex problems that not only have such guarantees in the \mixed model, but whose approximation factors are comparable to the best ones for the respective pure models. 
	 
	\paragraph{Online Convex Programming.} We first consider the following general problem: A convex objective function $\psi$ is known upfront. At time $t$, a feasible region $V_t \subseteq [0,1]^m$ arrives and using only the information seen thus far the algorithm needs to choose a feasible point $v_t \in V_t$. The goal is to minimize the total cost $\psi(\sum_{t=1}^n v_t)$. Notice that no additional properties are imposed on the feasible sets, such as convexity or being of packing/covering-type.\footnote{We will assume that the feasible sets $V_t$ are presented through an exact optimization oracle.} We use \ocp to denote this problem. 
	
%	This general problem, which we call \ocp, contains as special cases several other important ones, such as \emph{online set cover with convex costs} (see Appendix \ref{app:setCover}), thus inheriting all of its applications from~\cite{cvxPDFOCS}, and \emph{$\ell_p$-norm load-balancing} (where $\psi = \|\cdot\|_p$, and $V_t$ models the processing options of a job, with $v_t \in V_t$ giving the load incurred on the $m$ machines if option $v_t$ is used).
		
	The \mixed input model is instantiated in this context as follows. The set of time steps $[n]$ is partitioned arbitrarily into the adversarial times $Adv \subseteq [n]$ and the stochastic times $Stoch \subseteq [n]$. Both the quantity and the location of the adversarial times are unknown to the algorithm. The feasible sets $V_t$ are generated adversarially for times $t \in Adv$. In addition, there is an unknown distribution $\mathcal{D}$ over subsets of $[0,1]^m$, and in each stochastic time $t$ the feasible set $V_t$ is sampled independently from $\D$. To keep the model clean, we assume that the adversary is non-adaptive, so in particular the partition of time steps and the adversarial part of the sequence do not depend on the draws of the stochastic times. 
	
	We define $\OPT_{Adv} := \min \{\psi(\sum_{t \in Adv} v_t) : v_t \in V_t, ~\forall t \in Adv\}$ as the offline optimum over only the adversarial part of the input. The optimum over the stochastic part, $\OPT_{Stoch}$, is defined similarly: let $v^*$ be the function that maps each set $V \subseteq [0,1]^m$ to a point $v^*(V) \in V$ (i.e., a feasible selection) and that minimizes $\E \psi(\sum_{t \in Stoch} v^*(V_t))$, and define $\OPT_{Stoch} := \E \psi( \sum_{t \in Stoch} v^*(V_t))$.\footnote{If the minimization is not achieved by any selector $v^*$, we define $\OPT_{Stoch}$ by taking the infimum over them.}
	
%	 As mentioned before, we want algorithms whose total expected cost is roughly at most $\alpha \OPT_{Adv} + \beta \OPT_{Stoch}$.\mnote{Already using $\alpha$ in the theorem, change for something else} 
	
	Our first concrete result is an algorithm for \ocp on the \mixed model with expected cost roughly at most $\alpha \OPT_{Stoch} + \beta \OPT_{Adv}$. As in the literature for related problems (e.g., online set cover with convex costs~\cite{cvxPDFOCS}) we focus on the case where the cost function has non-decreasing gradients,\footnote{This means that if a vector $u$ is coordinate-wise smaller than $v$, then $\nabla \psi(u) \le \nabla \psi(v)$.} which models diseconomies of scale~\cite{onlinePDICALP2}, and the approximation factors depend on the growth rate of the cost function. We say that a function $\psi : \R^m_+ \rightarrow \R_+$ has \emph{growth of order at most $p$} if for every vector $u$ and $\gamma \ge 1$ we have $\nabla \psi(\gamma u) \le \gamma^{p-1} \nabla \psi(u)$.	Some functions that have growth rate at most $p$ are $\psi(u) = (\sum_i u_i)^p$, the $\ell_p$-norm raised to the power $p$, and more generally convex polynomials of degree $p$ with non-negative coefficients. 
	
%	We are interested in the mixed random-order/adversarial model: there is an unknown partition of the $n$ times into $RO$ and $Adv$, etc. Let us assume that the adversarial items are not adaptive, and in particular does not depend on the order of the random-order items. Notice the optimal solution is the same in every scenario (only that the decisions for the random-order times are appropriately permuted). Let $v_t^*$ be this (partially random) solution, and we use $\OPT_{Adv} = \sum_{t \in Adv} v_t^*$ and similarly for $\OPT_{RO}$ to denote the total ``load vector'' in each part. We may also use them to denote the cost in these parts, e.g., $\OPT_{Adv} = \psi(\OPT_{Adv})$. 
	
	\begin{thm} \label{thm:main}
		Consider the \ocp problem in the \mixed model. Suppose the objective function $\psi : \R^d_+ \rightarrow \R_+$ is convex, differentiable, non-decreasing, has $\psi(0) = 0$, and $\nabla \psi$ is non-decreasing. If $\psi$ has growth of order at most $p \ge 2$, then Algorithm \ref{alg:cvxProg} has expected cost at most
		\begin{align*}
			\alpha \OPT_{Stoch} + \beta \OPT_{Adv} + \frac{3}{2}\,\psi(p\ones),
		\end{align*}
		where: \vspace{-5pt}
		\begin{itemize}
			\item $\alpha = O(1)^p$ if $\psi$ is positively homogeneous \mnote{Actually need to assume $|Stoch| \ge 4p$, because of restriction of $\bar{\gamma}$}, and $\alpha = O\big(\min\big\{\tfrac{n}{|Stoch|}~,~p^2\big\}\big)^p$ otherwise. In particular, $\alpha$ is always $O(1)^p$ as long as a constant fraction of the times are stochastic \vspace{-3pt}
			\item $\beta = O(p)^p$ if $\psi$ is  separable, and $\beta = O(p)^{2p}$ otherwise.
		\end{itemize}
		%
%		\begin{align*}
%			O(\min\{\tfrac{n}{|Stoch|}~,~p^2\})^p \cdot \OPT_{Stoch} + O(p)^{2p} \cdot \OPT_{Adv} + O(1)^p\,\psi(p\ones).
%		\end{align*} 
	\end{thm}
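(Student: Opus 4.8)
The plan is to analyze Algorithm~\ref{alg:cvxProg}, which I expect to be a follow-the-regularized-leader / primal-dual scheme: maintain a dual vector $\theta_t$ that is (roughly) the gradient $\nabla\psi$ evaluated at a scaled version of the running sum $\sum_{s<t} v_s$, and at each step pick $v_t \in V_t$ minimizing the linearized cost $\langle \theta_t, v_t\rangle$. The key structural identity is the standard primal-dual telescoping: since $\nabla\psi$ is non-decreasing and $\psi(0)=0$, convexity gives $\psi(\sum_t v_t) \le \sum_t \langle \nabla\psi(\sum_{s\le t} v_s), v_t\rangle$, and a careful bound (using the growth-of-order-$p$ hypothesis to control how much $\nabla\psi$ can change as the running sum grows by a unit box vector) shows that the algorithm's cost is within a constant of $\sum_t \langle \theta_t, v_t\rangle$ plus a lower-order additive term of the form $O(\psi(p\ones))$. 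This is where the additive $\tfrac32\psi(p\ones)$ comes from.

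Next I would split the sum $\sum_t \langle \theta_t, v_t\rangle$ over $t\in Adv$ and $t\in Stoch$ and bound each piece against the corresponding offline optimum. For the adversarial part, greedy minimality of $v_t$ against $\theta_t$ gives $\langle \theta_t, v_t\rangle \le \langle\theta_t, v_t^{Adv}\rangle$ for the offline-optimal choice $v_t^{Adv}$; summing and using the growth bound plus Young's / Fenchel's inequality to relate $\langle \theta, \sum_{t\in Adv} v_t^{Adv}\rangle$ to $\psi(\sum_{t\in Adv} v_t^{Adv}) = \OPT_{Adv}$ yields the $\beta\,\OPT_{Adv}$ term; the exponent $p$ (separable) versus $2p$ (general) will come from whether one can handle coordinates independently or must pay an extra Cauchy--Schwarz/Hölder-type loss coupling the coordinates. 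For the stochastic part, the argument is the more delicate one: I would compare $\theta_t$ at stochastic times to the dual that the offline stochastic optimum would induce, using that the $V_t$ for $t\in Stoch$ are i.i.d.\ from $\D$. The standard trick is that $\E[\langle \theta_t, v^*(V_t)\rangle \mid \theta_t]$ can be related to $\tfrac{1}{|Stoch|}\,\E\langle\theta_t, \sum_{s\in Stoch} v^*(V_s)\rangle$, and then one bounds $\sum_{t\in Stoch}\langle\theta_t, v^*(V_t)\rangle$ in expectation against $\OPT_{Stoch}$ via Fenchel duality again, at the cost of the factor $\alpha$.

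The main obstacle — and the place I expect the real work to be — is the stochastic analysis, specifically decoupling the online dual $\theta_t$ (which depends on the entire history, adversarial and stochastic) from the fresh sample $V_t$, and doing so with a factor that does \emph{not} degrade with $|Adv|$. The clean way is a concentration/martingale argument: show that the running sum restricted to stochastic times, $\sum_{s\in Stoch, s\le t} v_s$, is with high probability comparable to its ``expected trajectory'' $\tfrac{t_{stoch}}{|Stoch|}\cdot(\text{offline stochastic sum})$, so that $\theta_t$ restricted to the stochastic contribution is controlled; the adversarial contributions to $\theta_t$ only \emph{increase} the dual and hence only make the greedy choice better for the stochastic comparison, which is exactly why adversarial steps do not hurt the stochastic guarantee. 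For the non-homogeneous case one additionally has to choose the scaling parameter $\bar\gamma$ in the algorithm as a function of $|Stoch|$ (as the margin note flags, this needs $|Stoch|\gtrsim p$), which is what produces the $O(\min\{n/|Stoch|, p^2\})^p$ bound rather than $O(1)^p$. I would organize the proof as: (i) primal-dual cost bound with the $\psi(p\ones)$ additive loss; (ii) adversarial term via greedy + Fenchel; (iii) stochastic term via i.i.d.\ coupling + concentration of the stochastic running sum + Fenchel; (iv) combine, and handle the homogeneous/separable special cases to sharpen the constants.
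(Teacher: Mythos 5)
Your skeleton (greedy best response against a linearized cost, split the analysis into $Adv$ and $Stoch$, Fenchel on each side) matches the paper's, but two of the load-bearing steps are missing or misdirected. The crux is the adversarial part. Because the duals $\bar y_t$ change every step, ``greedy $+$ Fenchel'' does not give you a single inner product $\ip{\theta}{\sum_{t\in Adv}v_t^*}$ to conjugate: applied per time step it leaves a sum of $|Adv|$ conjugate terms $\sum_{t\in Adv}\psi^*(\bar y_t/\alpha)$, which is far too large, and there is nothing in your sketch that absorbs it. The paper closes this gap by engineering the OCO subroutine to satisfy two extra properties beyond regret: (a) \emph{almost monotonicity} --- a set of only $p$ iterates dominates every $\bar y_t$ up to a factor $e$, so the adversarial times can be grouped and Fenchel applied only $p$ times, producing $e\,\psi(\alpha\vOPT_{Adv})+\frac{ep}{\alpha}\max_t\psi^*(\bar y_t)$ (or $\psi^*(\bigvee_t\bar y_t)$ with factor $1$ in the separable case); and (b) \emph{size control}, $\frac1p\max_t\psi^*(\bar y_t)\le\sum_t\ip{\bar y_t}{v_t}+\psi(p\ones)$, which is what turns the regret bound into Corollary~\ref{cor:cor} with a \emph{negative} slack $-\frac{1}{2p}\max_t\psi^*(\bar y_t)$ that cancels the leftover conjugate term upon choosing $\alpha=\Theta(p^2)$ (general) or $\alpha=\Theta(p)$ (separable). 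This cancellation mechanism is where $\beta=O(p)^{2p}$ versus $O(p)^p$ actually comes from --- not from a H\"older-type coordinate coupling as you guessed --- and without some version of it your adversarial bound does not close.

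On the stochastic side you state the correct trick in passing (condition on the history so $\bar y_t$ is fixed while $V_t$ is fresh, use $\E v_t^*=\frac{1}{|Stoch|}\E\vOPT_{Stoch}$, then Fenchel), but you then declare the ``real work'' to be a martingale/concentration argument for the stochastic running sum. No such argument is needed or used: the Lagrangian $L(y,v)=\ip{y}{v}-\frac1n\psi^*(y)$ carries the conjugate term with it, so Lemma~\ref{lemma:stoch} is a two-line conditional-expectation computation giving $\E\sum_{t\in Stoch}L(\bar y_t,v_t^*)\le\frac{1}{\beta}\psi(\beta\,\E\vOPT_{Stoch})$ with $\beta=n/|Stoch|$, valid for \emph{whatever} the adversary did to the trajectory of $\bar y_t$; your heuristic ``adversarial steps only increase the dual and hence only help'' is not a valid substitute, since a larger dual also inflates $\ip{\bar y_t}{v_t^*}$ in the comparison to $\OPT$'s fake cost. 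Finally, your account of the homogeneous case is off: the improvement from $O(\min\{n/|Stoch|,p^2\})^p$ to $O(1)^p$ does not come from tuning $\bar\gamma$ as a function of $|Stoch|$ (which the algorithm cannot know), but from the observation that when $\nabla\psi$ is positively homogeneous, replacing the uniform weight $\frac1n$ by $\gamma_t=\frac{1}{|Stoch|}\mathbf{1}[t\in Stoch]$ only rescales the iterates produced by \OCOalg and hence leaves the best responses $\bar v_t$ unchanged, so the algorithm can be analyzed against the modified Lagrangian for free (Appendix~\ref{app:homo}).
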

	
	Recall that $\psi$ is said to be \emph{positively homogeneous} if there is $p$ such that every vector $u$ and $\gamma \ge 0$ we have the equality $\psi(\gamma u) = \gamma^p \psi(u)$, and \emph{separable} if there are 1-dimensional functions $\psi_i : \R_+ \rightarrow \R_+$ such that $\psi(u) = \sum_i \psi_i(u_i)$ for all $u$. 
		
	This result recovers and greatly extends the $O(p)^p$-approximation obtained by Gupta et al.~\cite{raviConvexPD} for the pure adversarial model and the special separable case $\psi(u) = \sum_i (\ell_i u_i)^p + \sum_i c_i u_i$ with $\ell, c \ge 0$ (which already has applications in scheduling with speed scaling). It is also known that in the pure adversarial model such $\Omega(p)^p$-approximation is best possible, even with additive errors independent of $n$, even for the special case $\psi(u) = \sum_i u^p$~\cite{molinaro}. 
	%In this special case, optimal $\Theta(p)^p$-approximations are known~\cite{awerbuch}. Gupta et al.~\cite{raviConvexPD} also gave an optimal $\Theta(p)^p$ approximation for the more general case $\psi(u) = \sum_i (\ell_i u_i)^p + \sum_i c_i u_i$ with $\ell, c \ge 0$. 
%For online covering programs with convex costs, Azar et al. obtained an $O(p \log m)^p$-approximation under essentially the same assumptions (without an additive loss).\footnote{Instead of the condition $\nabla \psi(\alpha u) \le \alpha^{p-1} \nabla \psi(u)$ they use the slightly weaker one $\ip{\nabla \psi(u)}{u} \le p\cdot \psi(u)$, see the proof of Lemma \ref{lemma:growth}.} 
On the pure stochastic model, not much seems to be known about this problem, so Theorem \ref{thm:main} seems to in particular give the first non-trivial approximation ($O(1)^p$) in this model. 

	Theorem \ref{thm:main} also gives an algorithm for the classic $\ell_p$-norm load-balancing problem~\cite{awerbuch}, which is the special case of \ocp with $\ell_p$-norm objective function $\psi = \|\cdot\|_p$. Here the coordinates can be interpreted as machines, $V_t$ as a job, and each vector $v_t \in V_t$ as a processing option for this job that adds a load of $(v_t)_i$ on the $i$th machine. Then $\|\sum_t v_t\|_p$ is an aggregate measure of the total load incurred over the machines if selecting the processing options $v_t$'s. We remark that $\ell_p$-norm load balancing has been studied since at least the 70's~\cite{chandra,cody}, and that the case $\|\sum_t v_t\|_{\infty}$ corresponds to the standard makespan minimization. Even though the objective function $\psi = \|\cdot\|_p$ does not have non-decreasing gradient, the applying the algorithm from Theorem \ref{thm:main} to $\psi(u) = \|u\|_p^p = \sum_i u_i^p$ gives the following result (details are presented in Appendix \ref{app:loadBal}).
	
	\begin{thm} \label{thm:loadBal}
		There is an algorithm for the $\ell_p$-norm load balancing problem in the \mixed model with expected load at most $O(1)\cdot\OPT_{Stoch} + O(\min\{p, \log m\})\cdot \OPT_{Adv} + O(\min\{p, \log m\}\,m^{1/p})$.
	\end{thm}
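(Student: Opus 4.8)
\emph{Proof proposal.} The plan is to run the algorithm of Theorem~\ref{thm:main} on the surrogate objective $\psi(u) = \|u\|_p^p = \sum_i u_i^p$ and then translate its guarantee on $\E\,\psi(\sum_t v_t)$ back into one on $\E\,\|\sum_t v_t\|_p$. First I would check that $\psi$ satisfies the hypotheses: it is convex, differentiable, non-decreasing, has $\psi(0) = 0$, has coordinate-wise non-decreasing gradient $(\nabla\psi(u))_i = p\,u_i^{p-1}$, is both \emph{separable} and \emph{positively homogeneous} of degree $p$, and has growth of order exactly $p$ since $\nabla\psi(\gamma u) = \gamma^{p-1}\nabla\psi(u)$. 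Thus for $p \ge 2$ Theorem~\ref{thm:main} produces an algorithm whose expected $\psi$-cost is at most $\alpha\,\OPT_{Stoch}^\psi + \beta\,\OPT_{Adv}^\psi + \tfrac{3}{2}\,\psi(p\ones)$ with $\alpha = O(1)^p$ (by homogeneity), $\beta = O(p)^p$ (by separability), and $\psi(p\ones) = m\,p^p$, where I write $\OPT_{\cdot}^\psi$ for the offline optima measured with $\psi$.

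Next I would pass from $\ell_p^p$ back to $\ell_p$. Since $t\mapsto t^{1/p}$ is concave and subadditive, combining Jensen's inequality with the bound of Theorem~\ref{thm:main} gives $\E\,\|\sum_t v_t\|_p = \E\,\big(\psi(\sum_t v_t)\big)^{1/p} \le \alpha^{1/p}(\OPT_{Stoch}^\psi)^{1/p} + \beta^{1/p}(\OPT_{Adv}^\psi)^{1/p} + (\tfrac{3}{2}m p^p)^{1/p}$, and $\alpha^{1/p} = O(1)$, $\beta^{1/p} = O(p)$, $(\tfrac{3}{2}m p^p)^{1/p} = O(p\,m^{1/p})$. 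Because $x\mapsto x^p$ is strictly increasing on $\R_+$, the same selection minimizes $\|\sum_{t\in Adv}v_t\|_p$ and its $p$-th power, so $(\OPT_{Adv}^\psi)^{1/p} = \OPT_{Adv}$, the $\ell_p$-norm adversarial optimum. The only genuinely probabilistic point is to bound $(\OPT_{Stoch}^\psi)^{1/p}$ by the $\ell_p$-norm stochastic optimum $\OPT_{Stoch}$, and this is where I expect the actual work.

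For that step, let $v^*$ be a (near-)optimal selector for $\OPT_{Stoch}$, put $S = \sum_{t\in Stoch}v^*(V_t)$ and $\mu = \E\,S$. Since the $V_t$ with $t\in Stoch$ are i.i.d.\ and $v^*$ is fixed, each coordinate $S_i$ is a sum of independent $[0,1]$-valued random variables of mean $\mu_i$, so a standard Bernstein/Bennett-type moment bound (integrate the Chernoff tail for the $p$-th moment) gives $\E[S_i^p] \le O(1)^p(\mu_i + p)^p \le O(1)^p(\mu_i^p + p^p)$. Summing over $i$ and using $\|\mu\|_p = \|\E\,S\|_p \le \E\,\|S\|_p = \OPT_{Stoch}$ (Jensen for the norm), we get $\OPT_{Stoch}^\psi \le \E\,\|S\|_p^p \le O(1)^p(\OPT_{Stoch}^p + m\,p^p)$, hence $(\OPT_{Stoch}^\psi)^{1/p} \le O(1)\,(\OPT_{Stoch} + p\,m^{1/p})$. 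Feeding the three estimates back into the display of the previous paragraph yields expected load $O(1)\,\OPT_{Stoch} + O(p)\,\OPT_{Adv} + O(p\,m^{1/p})$.

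Finally, to replace $p$ by $\min\{p,\log m\}$ I would use norm equivalence: for every $q \ge \log m$ and every $p \ge q$ one has $\|x\|_\infty \le \|x\|_p \le \|x\|_q \le m^{1/q}\|x\|_\infty \le e\,\|x\|_\infty$, so when $p > \log m$ one runs the algorithm with $q = \lceil\log m\rceil$ in place of $p$, bounds $\|\sum_t v_t\|_p \le \|\sum_t v_t\|_q$, and translates the $\ell_q$-guarantee back to the $\ell_p$-optima at the cost of a single constant factor (note $q\,m^{1/q} = O(\log m)$); taking the better of the two runs gives the claimed $O(\min\{p,\log m\})$ multiplicative factors, and the additive term becomes $O(\min\{p,\log m\}\,m^{1/p})$ since $m^{1/p} = O(1)$ whenever $p \ge \log m$. (For $1 \le p < 2$, which is outside the scope of Theorem~\ref{thm:main}, one runs the $p = 2$ instance and again loses only a constant via norm equivalence.) The main obstacle is the stochastic-optimum conversion of the third paragraph; everything else is bookkeeping with Jensen's inequality, subadditivity of $t\mapsto t^{1/p}$, and norm comparisons.
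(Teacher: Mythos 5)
Your proposal is correct in its main line and follows the same overall strategy as the paper (run Theorem~\ref{thm:main} with $\psi(u)=\|u\|_p^p$, pass to the norm by taking $p$-th roots via subadditivity of $x\mapsto x^{1/p}$ and Jensen, and reduce $p>\log m$ to $q\approx\log m$ by norm equivalence), but it handles the one genuinely delicate step --- converting the stochastic optimum --- by a different mechanism. Treating Theorem~\ref{thm:main} as a black box, you are forced to compare $(\E\|S\|_p^p)^{1/p}$ with $\E\|S\|_p$, which is the wrong direction of Jensen, and you bridge the gap probabilistically: since the stochastic sets are i.i.d.\ and the selector is fixed, each coordinate load is a sum of independent $[0,1]$ variables, so $\E[S_i^p]\le O(1)^p(\mu_i+p)^p$, and the extra $m\,p^p$ folds into the additive $O(p\,m^{1/p})$ term; this moment bound is standard and correctly applied. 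The paper avoids concentration entirely: it invokes the stronger guarantee established in the homogeneous-case analysis, inequality \eqref{eq:wrapHomo} of Appendix~\ref{app:homo}, where the stochastic term is $\psi\big(\E\,\vOPT_{Stoch}\big)$ with the expectation \emph{inside} $\psi$, so after taking $p$-th roots one only needs $\big\|\E\sum_{t\in Stoch}v^*_t\big\|_p\le \E\big\|\sum_{t\in Stoch}v^*_t\big\|_p=\OPT_{Stoch}$, i.e.\ Jensen for the norm. In short: the paper's route is shorter but relies on opening up the algorithm's analysis, while yours is self-contained relative to the stated theorem at the price of a Bernstein-type moment estimate; both yield the same final bound, and both treat $p>\log m$ by essentially the same norm-equivalence reduction.

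One caveat: your parenthetical claim that for $1\le p<2$ one can run the $p=2$ instance and ``lose only a constant via norm equivalence'' is not correct as stated, since the best general comparison is $\|x\|_p\le m^{1/p-1/2}\|x\|_2$, so translating the algorithm's $\ell_2$ load into an $\ell_p$ load costs a polynomial factor in $m$. The paper's own proof also only covers the regime where Theorem~\ref{thm:main} applies ($p\ge 2$), so this does not affect the comparison, but the remark should be dropped or the case $p<2$ left open.
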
 
	
	This extends the recent result from~\cite{molinaro} that gave a single algorithm that obtains expected load at most $\approx (1+\e)\, \OPT + \frac{p(m^{1/p} -1)}{\e}$ in the pure stochastic model, and load $\Theta(\min\{p, \log m\}) \cdot \OPT$ in the adversarial model (which is optimal). While the algorithm from~\cite{molinaro} has good guarantees on both pure models, it does not necessarily mean that it maintains them in the \mixed model, as the algorithm from Theorem \ref{thm:loadBal} does. 
	 
%######################################################	
%######################################################

	\paragraph{Welfare Maximization with Concave Costs.} We also consider the following general problem with rewards and costs, introduced by Blum et al.~\cite{BGMS11}. Again the convex cost function $\psi$ is known upfront. At each time step, a customer comes with a reward $c_t \in \R$ (say, how much it is willing to pay for a service) and a ``resource consumption'' vector $a_t \in [0,1]^m$ (for this service). Based on the information thus far, the algorithm needs to choose $x_t\in[0,1]$, indicating how much of this request is wants to fulfill. The goal is to maximize the total profit of reward minus cost of resources: $$\sum_{t=1}^n c_t x_t - \psi\bigg(\sum_{t=1}^n a_t x_t\bigg).$$ We refer to this problem as \welf.
	
	Our next contribution is an algorithm for this problem  on the \mixed model, again under the same assumptions of the current literature that requires an additional mild assumption on the cost function~\cite{cvxPDFOCS} (notice that convex polynomials of degree $p$ with non-negative coefficients satisfy the requirements).

	\begin{thm} \label{thm:welfare}
		Consider the \welf problem in the \mixed model. Suppose the cost function $\psi : \R^m_+ \rightarrow \R_+$ satisfies the assumption of Theorem \ref{thm:main}, and that it can be factored as $\psi = \psi_{lin} + \psi_{high}$, where $\psi_{lin}$ is a linear function and $\psi_{high}$ grows at least quadratically: $\psi_{high}(\gamma u) \ge \gamma^2 \,\psi_{high}(u)$ for $\gamma \ge 1$. Then Algorithm~\ref{alg:welfare} has expected profit at least
		\begin{align*}
			\Omega(\max\{\tfrac{|Stoch|}{n}, \tfrac{1}{p}\}) \cdot \OPT_{Stoch} + \Omega(\tfrac{1}{p}) \cdot \OPT_{Adv} - ~O(\psi(p \ones)). 
		\end{align*} 
	\end{thm}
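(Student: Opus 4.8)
The plan is to run a \emph{primal-dual} argument based on Fenchel duality, mirroring the analysis behind Theorem~\ref{thm:main} but ``in reverse'', since \welf is a maximization problem. The backbone is a weak-duality bound for the offline optimum: for every subset of rounds $S$ and every price vector $\theta \ge 0$, the offline optimum of the instance restricted to $S$ satisfies
\[
\OPT_S \;\le\; \sum_{t \in S} (c_t - \ip{\theta}{a_t})^+ \;+\; \psi^*(\theta),
\]
where $\psi^*$ is the Fenchel conjugate of $\psi$ on the nonnegative orthant; this follows by writing $\psi(\sum_{t\in S} a_t x_t) = \max_\theta\big(\ip{\theta}{\sum_{t\in S} a_t x_t} - \psi^*(\theta)\big)$ and applying weak duality to the resulting saddle problem. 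Applying this once with a price tailored to $Adv$ and once with a price tailored to $Stoch$ (both obtained from the online prices the algorithm maintains) is what will produce the two terms $\Omega(1/p)\,\OPT_{Adv}$ and $\Omega(\max\{|Stoch|/n,1/p\})\,\OPT_{Stoch}$. Here the $\Omega(1/p)$ loss is intrinsic to convex costs of growth order $p$, and it appears because one has to balance the size of $\theta$ against $\psi^*(\theta)$ using the growth bound $\ip{\nabla\psi(u)}{u} \le p\,\psi(u)$ (itself a consequence of $\psi(0)=0$ and $\nabla\psi(\gamma u)\le\gamma^{p-1}\nabla\psi(u)$).

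On the algorithm side, Algorithm~\ref{alg:welfare} maintains at time $t$ a price $\theta_t$ that is the (coordinate-wise) combination — most naturally the sum — of an ``adversarial'' price updated by a multiplicative-type rule driven by the current usage $U_{t-1} := \sum_{s<t} a_s x_s$ (essentially the gradient of $\psi_{high}$ at a mildly inflated copy of $U_{t-1}$ offset by the buffer $p\ones$) and a ``stochastic'' price produced by an \OCOalg-style online-learning procedure fed only the stochastic rounds; it then sets $x_t \in [0,1]$ to a smoothed best response, essentially accepting when $c_t$ exceeds the posted price $\ip{\theta_t}{a_t}$, with the linear part $\psi_{lin}$ folded directly into an effective reward $c_t - \ip{\nabla\psi_{lin}}{a_t}$ so that it is accounted for exactly. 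The first step of the proof is the algorithmic lower bound
\[
\text{profit} \;\ge\; \sum_t (c_t - \ip{\theta_t}{a_t})^+ \;-\; \big(\text{cost slack}\big),
\]
together with a bound showing the cost slack is at most a constant fraction of $\psi(U)$ plus $O(\psi(p\ones))$; this uses monotonicity of the usages $U_t$, the non-decreasing gradient, the growth-order-$p$ inequality above, and the at-least-quadratic growth of $\psi_{high}$ to charge the ``startup phase'' (where usage is comparable to the buffer $p\ones$) to the additive term.

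The two halves are then stitched together: because the combined price $\theta_t$ dominates each component price, the pseudo-profit $\sum_t (c_t - \ip{\theta_t}{a_t})^+$ can be compared \emph{simultaneously} against both benchmarks. Restricting the sum to $t \in Adv$ and invoking the weak-duality bound with the adversarial price gives $\Omega(1/p)\,\OPT_{Adv}$; restricting to $t \in Stoch$ and using the regret guarantee of the \OCOalg procedure to compare with the best fixed price on the observed sample, followed by an i.i.d.\ concentration argument relating that fixed price to $\OPT_{Stoch}$, gives $\Omega(\max\{|Stoch|/n,1/p\})\,\OPT_{Stoch}$ — the $\max$ being the better of the ``enough samples'' regime and a fall-back to the same adversarial-type bound that already underlies the $\Omega(1/p)$ floor. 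Adding the two lower bounds and subtracting the single shared cost term $\psi(U)$ — already charged by the slack bound — yields the stated guarantee, with the additive loss collapsing to $O(\psi(p\ones))$.

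The step I expect to be the \textbf{main obstacle} is the interaction between the two price processes in the cost-slack accounting: the combined price must be large enough that resources consumed by \emph{both} types of customers are paid for (so $\psi(U)$ is dominated), yet small enough that neither $\sum_{t\in Adv}(\cdot)^+$ nor $\sum_{t\in Stoch}(\cdot)^+$ is destroyed relative to its own benchmark — and all of this while the adversary controls the \emph{location} of its rounds, so usage from the two types is arbitrarily interleaved and neither component price ``sees'' a clean prefix. Getting the additive error to be exactly $O(\psi(p\ones))$, independent of $n$ and of the split, requires choosing the buffer and the inflation factor in the price rule so that the first $\Theta(p)$ units of each type are effectively free but nothing beyond that is; this is precisely where the decomposition $\psi = \psi_{lin} + \psi_{high}$ and the quadratic lower bound on the growth of $\psi_{high}$ are used (the linear part cannot tolerate any multiplicative loss and must be absorbed into the rewards, while the super-quadratic part has vanishing gradient near the buffer, keeping the startup controlled). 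The concentration step for small $|Stoch|$ is a secondary technical point, dispatched by the same reduction to the adversarial-type bound.
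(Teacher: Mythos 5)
There is a genuine gap, on two counts. First, your algorithm is built around a ``stochastic'' price produced by an OCO procedure \emph{fed only the stochastic rounds} — but in the \mixed model the algorithm does not know which rounds are stochastic (both the number and the location of adversarial times are hidden), so this component cannot be implemented. The paper's Algorithm~\ref{alg:welfare} avoids this entirely: it feeds \emph{all} consumption vectors $\bar v_t$ to \OCOalg, computes $\bar x_t$ as a best response to the fake cost $c_t x_t - L(\bar y_t, a_t x_t)$, and the adversarial rounds are handled in the analysis simply by comparing the best response against the candidate $x_t=0$, which shows they contribute nonnegatively to the fake profit; the stochastic benchmark is then reached by a per-step conditional-expectation argument (Lemma~\ref{lemma:stoch} applied to $v^*_t/\beta$), with no concentration needed. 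Second, the step you yourself flag as the main obstacle — making a single combined price simultaneously large enough to pay for $\psi(U)$ yet small enough to preserve both pseudo-profits, under arbitrary interleaving — is exactly the part you do not resolve, and it is not needed for the theorem as stated. The paper never proves an $\Omega(1/p)\,\OPT_{Adv}$ guarantee for its own primal-dual algorithm: it proves only Theorem~\ref{thm:welfareStoch}, namely expected profit $\Omega(\tfrac{|Stoch|}{n})\OPT_{Stoch} - O(\psi(p\ones))$, and obtains both $\Omega(\tfrac1p)\OPT_{Adv}$ and the $\Omega(\tfrac1p)$ fall-back for $\OPT_{Stoch}$ by randomly mixing with the existing adversarial algorithm of Azar et al.~\cite{cvxPDFOCS} run as a black box on the whole instance. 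Your asserted ``cost slack $\le$ constant fraction of $\psi(U)$ plus $O(\psi(p\ones))$'' bound is also left unproved; in the paper the analogous step is concrete: the actual play is the scaled $\tilde x_t = \bar x_t/8^2$, and the at-least-quadratic growth of $\psi_{high}$ (after folding $\psi_{lin}$ into the rewards, which you do correctly) converts the regret bound $\psi(\tfrac18\sum_t\bar v_t) \le \sum_t L(\bar y_t,\bar v_t) + \psi(p\ones)$ into a bound on the real profit at only a constant-factor loss.

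In short, your weak-duality inequality and the growth-based source of the $1/p$ factor are sound in spirit (they are indeed what underlies the adversarial guarantee in \cite{cvxPDFOCS}), but the proposal as written hinges on information the algorithm does not possess and leaves the central interaction/charging argument unestablished, whereas the paper's route decouples the two benchmarks: a new primal-dual algorithm for the stochastic part only, plus a black-box combination for the adversarial part.
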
	
	
	This result extends the result of Azar et al.~\cite{cvxPDFOCS} that gave a  $\Theta(\frac{1}{p})$-approximation in the pure adversarial model, under the same assumption\footnote{They assume that, in addition to the assumptions from Theorem \ref{thm:main}, $\psi$ is a degree $p$ polynomial, but only the decomposition $\psi = \psi_{lin} + \psi_q$ as in Theorem \ref{thm:welfare} is required.} as in our theorem (following Blum et al.~\cite{BGMS11} and Huang and Kim~\cite{HK15} that gave the same approximation for the special case of separable costs (i.e., $\psi(u) = \sum_i \psi(u_i)$) where each $\psi_i$ is a degree-$p$ polynomial). Even in the pure stochastic model, Theorem \ref{thm:welfare} seems to give the first dimension-independent approximation for this problem. We note, however, that for the pure stochastic model Gupta et al.~\cite{guptaMolRuta} gave a $\Omega(\frac{1}{m})$-approximation without any growth or factorization assumptions on $\psi$.
	
%	Therefore, our result matches the optimal approximation in the adversarial part, and achieves an incomparable guarantee in the stochastic part. In fact, since this is a maximization problem, we just design an algorithm for the \mixed model that obtains profit at least $\Omega(1)\cdot\OPT_{Stoch} - \psi(\frac{p\ones}{2e})$: randomly choosing between this algorithm and the $\Omega(\frac{1}{p})$-approximation of~\cite{cvxPDFOCS} for the adversarial case gives the result. So matching the best adversarial guarantee actually comes for free. 

%######################################################	
%######################################################

%	\paragraph{Online Packing LPs.}
	
%	\red{TODO; bring spiel written in paper}

%######################################################	
%######################################################

	\paragraph{Techniques.} Our primal-dual algorithms are designed using connections with \emph{Online Convex Optimization} (OCO). The informal principle is that in primal-dual algorithms the ``right'' way of updating duals is via a low regret OCO strategy. This principle has been an important component in recent development in online algorithms, not only explaining the ubiquitous exponential dual updates, but also as tool for obtaining the best results for several problems~\cite{guptaMolinaroMOR,AgrawalDevanur,fazel,molinaro}. (Recent results on $k$-Server and related problems~\cite{BCLLM18,guptaKServer,BCLL19,leePureEntropic} also use OCO-based updates, but in the primal.) 
	
	While this connection has been explored both in the pure stochastic~\cite{guptaMolinaroMOR,AgrawalDevanur,molinaro} and pure adversarial~\cite{fazel} settings, the analyzes are quite different in each of the models. In particular this is a main hurdle for analyzing primal-dual algorithms in the \mixed model. So one of our goals is obtaining analyzes that are more \emph{homogeneous}, i.e., can be applied on a per-time step basis, despite the inevitable \emph{disparity} in the loses in each model. For that, one of our main technical ingredients is an OCO algorithm satisfying multiple properties. First, since we want guarantees that are independent on $n$, we cannot work with the standard notion of additive regret, which would typically lead to additive losses $\Omega(\sqrt{n})$. Instead, we work with both additive and a specific type of multiplicative regret. For that, we use both regularization and \emph{shifting} of the OCO functions (the latter was used in the context of $\ell_p$-norms in~\cite{molinaro}). Another issue that precludes the use of off-the-shelf OCO algorithms is that we need a custom control over the ``size'' of the vectors returned by the OCO procedure. Also, ideally we would like these vectors to be increasing: in this case we could majorize (in all coordinates simultaneously) these vectors by the last one, ``factor it out'', and add up the actions over all the adversarial time steps and treat them in one goal. While we are not quite able to guarantee monotonicity while maintaining good enough regret, this is another important factor to balance out in our context that is not present in the standard OCO literature.

%	The end algorithms are very simple and use our OCO (with worst-case guarantees) in a black-box fashion. We hope this helps pushing the understanding of the interaction between online optimization and online learning. 
		
%######################################################	
%######################################################
%######################################################
%######################################################

	\section{Preliminaries}
	
	We will work throughout with ``nice'' convex cost functions.
	
	\begin{ass}[Nice cost functions]
		A function $\psi: \R^m_+ \rightarrow \R_+$ is \emph{nice} if it is convex, differentiable, non-decreasing ($\psi(u') \ge \psi(u)$ for $u' \ge u$), and $\psi(0) = 0$.
	\end{ass}

%######################################################	
	
	\paragraph{Fenchel conjugate.} We make use of Fenchel conjugacy throughout the paper, and refer the reader to~\cite{HUL} for more information on the topic.  Given a convex function $\psi : \R^m_+ \rightarrow \R$, its \emph{Fenchel conjugate} is $$\psi^*(y) := \sup_{u \in \R^m} (\ip{u}{y} - \psi(u)).$$ As an example, the conjugate of $\psi(u) = \frac{1}{p} u^p$ is $\psi^*(y) = (1-\frac{1}{p})\,y^{\frac{p}{p-1}}$. For nice functions we have the involution $\psi = \psi^{**}$, a crucial fact that will be used often:
	\begin{align}
		\psi(u) = \sup_{y \in \R^m} \bigg(\ip{y}{u} ~-~ \psi^*(y)\bigg). \label{eq:fenchel}
	\end{align}
	This has the interpretation of representing $\psi$ via its lower bounding linearizations $\ip{y}{\cdot} - \psi^*(y)$. It directly gives the \emph{Fenchel inequality}: 
	\begin{align}
		\ip{y}{u} ~\le~\psi(u) + \psi^*(y)~~~~~\forall u,y. \label{eq:fenchelIneq}
	\end{align}
	
	We collect important properties of the conjugate of a nice function $\psi : \R^m_+ \rightarrow \R_+$ (notice its domain/codomain); see Appendix A.1 of \cite{onlinePDICALP} for proofs.
	
	\begin{lemma} \label{lemma:basicFenchel}
	If $\psi :\R^m_+ \rightarrow \R_+$ is a nice function then: 
	\begin{enumerate}
		\setlength\itemsep{0em}
		\item[a)] The conjugate $\psi^*$ is convex, non-decreasing, non-negative, and has $\psi^*(0) = 0$
		\item[b)] The supremum in \eqref{eq:fenchel} is achieved by $y = \nabla \psi(u)$
		\item[c)] The supremum in \eqref{eq:fenchel} can be taken over only non-negative vectors. 
	\end{enumerate}
	\end{lemma}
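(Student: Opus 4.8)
The plan is to derive all three parts directly from the definition of the conjugate, after one preliminary observation: since $\psi$ is finite only on $\R^m_+$, the supremum defining $\psi^*$ is effectively over the orthant, i.e.\ $\psi^*(y) = \sup_{u \in \R^m_+}(\ip{u}{y} - \psi(u))$ (equivalently, extend $\psi$ by $+\infty$ off $\R^m_+$). This identification is the only place where the restricted domain matters, and keeping it in mind turns the rest into bookkeeping.

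For (a) I would argue as follows. Convexity of $\psi^*$ is immediate, being a pointwise supremum of the affine maps $y \mapsto \ip{u}{y} - \psi(u)$. Non-negativity and $\psi^*(0) = 0$ come from taking $u = 0$ in the supremum together with $\psi(0) = 0$ and $\psi \ge 0$: plugging $u=0$ gives $\psi^*(y) \ge 0$, while $\psi^*(0) = \sup_{u \ge 0}(-\psi(u)) = -\inf_{u\ge 0}\psi(u) = 0$. Monotonicity follows because for $y' \ge y$ and any $u \ge 0$ we have $\ip{u}{y'} - \psi(u) = \ip{u}{y} - \psi(u) + \ip{u}{y'-y} \ge \ip{u}{y} - \psi(u)$, and taking the supremum over $u \ge 0$ preserves the inequality.

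For (b), fix $u$ and set $g(v) := \ip{v}{\nabla\psi(u)} - \psi(v)$, whose supremum over $v$ is by definition $\psi^*(\nabla\psi(u))$. Since $g$ is concave with $\nabla g(v) = \nabla\psi(u) - \nabla\psi(v)$, the point $v = u$ is a global (unconstrained, hence also constrained-to-$\R^m_+$) maximizer, so $\psi^*(\nabla\psi(u)) = g(u) = \ip{u}{\nabla\psi(u)} - \psi(u)$; rearranging, $\ip{\nabla\psi(u)}{u} - \psi^*(\nabla\psi(u)) = \psi(u)$. Since Fenchel's inequality \eqref{eq:fenchelIneq} already gives $\ip{y}{u} - \psi^*(y) \le \psi(u)$ for every $y$, the supremum in \eqref{eq:fenchel} is attained at $y = \nabla\psi(u)$ (and this incidentally re-derives \eqref{eq:fenchel} itself for nice $\psi$). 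For (c), I would note that differentiability together with $\psi$ non-decreasing forces $\nabla\psi(u) \ge 0$ on $\R^m_+$ — the one-sided derivatives $\partial_i \psi(u) = \lim_{h \downarrow 0}(\psi(u + h e_i) - \psi(u))/h$ are $\ge 0$ in the interior, and this extends to the boundary by continuity of $\nabla\psi$ — so the maximizer identified in (b) is already non-negative and restricting the supremum in \eqref{eq:fenchel} to $y \ge 0$ changes nothing.

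None of these steps is deep. The only things I would be careful about are the convention for the effective domain $\R^m_+$ inside the two suprema, and the boundary case in the first-order argument for (b) — which turns out to be harmless precisely because $\nabla g(u) = 0$ already makes $u$ an unconstrained maximizer, so no multipliers for the constraints $v \ge 0$ are needed. If one wanted to bypass even that point, one could instead invoke the involution $\psi = \psi^{**}$ (already available for nice $\psi$) together with standard subgradient optimality, but the direct computation above seems cleaner.
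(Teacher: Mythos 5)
Your proof is correct. Note that the paper does not prove Lemma \ref{lemma:basicFenchel} in the text at all---it defers to Appendix A.1 of \cite{onlinePDICALP}---so there is no in-paper argument to diverge from; your direct derivation from the definition of the conjugate is the standard one such a reference contains. All the steps check out: once the supremum defining $\psi^*$ is restricted to the effective domain $\R^m_+$, convexity, non-negativity, $\psi^*(0)=0$, and monotonicity of $\psi^*$ are immediate; for (b), the concave function $g(v)=\ip{v}{\nabla\psi(u)}-\psi(v)$ has $\nabla g(u)=0$, so the first-order inequality for concave functions gives $\psi^*(\nabla\psi(u))=\ip{u}{\nabla\psi(u)}-\psi(u)$, and combining with Fenchel's inequality (which follows from the definition of $\psi^*$ alone, so there is no circularity with \eqref{eq:fenchel}) identifies $y=\nabla\psi(u)$ as a maximizer and re-derives the involution; (c) then follows because that maximizer is non-negative. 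One small simplification: you do not need continuity of $\nabla\psi$ to handle boundary points---at any $u\in\R^m_+$, including points with some $u_i=0$, the point $u+he_i$ lies in $\R^m_+$ for $h>0$, so $\partial_i\psi(u)\ge 0$ follows directly from the forward difference quotients and monotonicity of $\psi$, with no limiting argument from the interior required.
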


%######################################################

	\paragraph{Bounded growth.} We will work throughout with cost functions with growth of order at most $p$, namely $\nabla \psi(\alpha u) \le \alpha^{p-1} \nabla \psi(u)$ for $\alpha \ge 1$. This is satisfied, for example, by all polynomials of degree $p$. We collect some implications of this condition that essentially come from~\cite{cvxPDFOCS}; we provide a proof in Appendix~\ref{app:fenchel}.
		
	\begin{lemma} \label{lemma:growth}
		Consider a nice function $\psi :\R^m_+ \rightarrow \R_+$ with growth of order at most $p > 1$. Then:	
		\begin{enumerate}
			\vspace{-4pt}
			\item[a)]  $\psi(\alpha u) \le \alpha^p\, \psi(u)$ for every vector $u$ and scalar $\alpha \ge 1$.
			\vspace{-6pt}
			\item[b)] $\psi^*(\delta y) \le \delta^{\frac{p}{p-1}}\cdot \psi^*(y)$ for every vector $y$ and scalar $\delta \in (0,1]$
			\vspace{-6pt}
			\item[c)] $\psi^*(\nabla \psi(u)) \le p\cdot \psi(u)$.
		\end{enumerate} 
	\end{lemma}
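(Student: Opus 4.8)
The plan is to route all three parts through one ``Euler-type'' inequality: if $\psi$ is nice with growth of order at most $p$, then $\ip{\nabla\psi(u)}{u}\le p\,\psi(u)$ for every $u\in\R^m_+$. To prove this I would first observe that the one-sided hypothesis $\nabla\psi(\gamma u)\le\gamma^{p-1}\nabla\psi(u)$, $\gamma\ge1$, ``inverts'' under downward scaling: applying it to the vector $tu$ with $\gamma=1/t$ for $t\in(0,1]$ gives $\nabla\psi(u)\le t^{-(p-1)}\nabla\psi(tu)$, i.e.\ $\nabla\psi(tu)\ge t^{p-1}\nabla\psi(u)$. Since $\psi$ is nice we have $\nabla\psi\ge0$ and, as $u\ge0$, all inner products below are non-negative, so the fundamental theorem of calculus along the segment $[0,u]$ yields
\[
	\psi(u)=\psi(u)-\psi(0)=\int_0^1\ip{\nabla\psi(tu)}{u}\,\d t\;\ge\;\ip{\nabla\psi(u)}{u}\int_0^1 t^{p-1}\,\d t=\tfrac1p\ip{\nabla\psi(u)}{u}.
\]

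For part (a) I would integrate $\tfrac{\d}{\d t}\psi(tu)=\ip{\nabla\psi(tu)}{u}$ from $1$ to $\alpha$, bounding the integrand by $t^{p-1}\ip{\nabla\psi(u)}{u}$ via the growth hypothesis (dotted with $u\ge0$) and then applying the Euler inequality:
\[
	\psi(\alpha u)=\psi(u)+\int_1^\alpha\ip{\nabla\psi(tu)}{u}\,\d t\;\le\;\psi(u)+\ip{\nabla\psi(u)}{u}\cdot\tfrac{\alpha^p-1}{p}\;\le\;\psi(u)+\psi(u)(\alpha^p-1)=\alpha^p\psi(u).
\]

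Part (b) I would obtain by dualizing part (a). In $\psi^*(\delta y)=\sup_{u\in\R^m_+}(\ip{u}{\delta y}-\psi(u))$ I substitute $u=\delta^{1/(p-1)}w$, which is a bijection of $\R^m_+$ since $\delta>0$: the coefficient of $\ip{w}{y}$ becomes $\delta\cdot\delta^{1/(p-1)}=\delta^{p/(p-1)}$, while part (a) with $\alpha=\delta^{-1/(p-1)}\ge1$ gives $\psi(\delta^{1/(p-1)}w)\ge\delta^{p/(p-1)}\psi(w)$; feeding both into the supremum and factoring out $\delta^{p/(p-1)}$ gives the claim. Part (c) is then immediate from Fenchel conjugacy: by Lemma~\ref{lemma:basicFenchel}(b) the supremum in \eqref{eq:fenchel} is attained at $y=\nabla\psi(u)$, so $\psi^*(\nabla\psi(u))=\ip{\nabla\psi(u)}{u}-\psi(u)$, which the Euler inequality bounds by $(p-1)\psi(u)\le p\,\psi(u)$.

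The only genuinely nontrivial step is the very first one: recognizing that the one-sided growth hypothesis, which a priori only upper-bounds $\nabla\psi$ under upward scaling, turns into the lower bound $\nabla\psi(tu)\ge t^{p-1}\nabla\psi(u)$ that is exactly what makes the integral argument go through. Everything after that is routine Fenchel-duality bookkeeping; the only care required is in invoking niceness ($\psi(0)=0$, monotonicity, differentiability, $\nabla\psi\ge0$ on $\R^m_+$) to justify the integral representations and the sign manipulations, and in noting that the supremum defining $\psi^*$ is effectively over $\R^m_+$ so the cone substitution in part (b) is legitimate.
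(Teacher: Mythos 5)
Your proof is correct, and its pivotal step is exactly the one the paper actually proves: Appendix B establishes only the Euler-type inequality $\ip{\nabla\psi(u)}{u}\le p\,\psi(u)$, via the same observation you make that the growth hypothesis applied at $tu$ with $\gamma=1/t$ inverts to $\nabla\psi(tu)\ge t^{p-1}\nabla\psi(u)$ for $t\in(0,1]$, followed by integration along $[0,u]$ using $\psi(0)=0$. Where you genuinely diverge is in what comes next: the paper then simply invokes Lemma 4 of~\cite{cvxPDFOCS} to pass from this inequality to (a)--(c), whereas you derive all three directly -- (a) by integrating $t\mapsto\psi(tu)$ over $[1,\alpha]$ and bounding the integrand by $t^{p-1}\ip{\nabla\psi(u)}{u}\le p\,t^{p-1}\psi(u)$; (b) by the substitution $u=\delta^{1/(p-1)}w$ in the conjugate combined with (a) at $\alpha=\delta^{-1/(p-1)}\ge 1$; (c) from the Fenchel--Young equality $\psi^*(\nabla\psi(u))=\ip{\nabla\psi(u)}{u}-\psi(u)$ given by Lemma~\ref{lemma:basicFenchel}(b), which in fact yields the slightly stronger bound $(p-1)\,\psi(u)$. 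These derivations check out: the exponent bookkeeping in (b) is right ($\delta\cdot\delta^{1/(p-1)}=\delta^{p/(p-1)}$), and the substitution is legitimate since $\psi$ lives on $\R^m_+$, so the supremum defining $\psi^*$ is effectively over that cone, which positive scaling maps bijectively to itself. In short, your argument is a self-contained version of the paper's proof, trading the external citation for three short and correct computations; this buys independence from~\cite{cvxPDFOCS} at essentially no extra cost.
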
	
	
%######################################################	

	\paragraph{Non-decreasing gradients.} We will also make use of the fact that functions with non-decreasing gradients are superadditive, see for example Lemma 9 of~\cite{onlinePDICALP2}.
	
	\begin{lemma} \label{lemma:superadd}
	If $\psi :\R^m_+ \rightarrow \R_+$ is a nice function, then for all $u,v \in \R^m_+$ we have $\psi(u + v) \ge \psi(u) + \psi(v)$.
	\end{lemma}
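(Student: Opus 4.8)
The plan is to compare how much $\psi$ grows along two parallel segments: from $0$ to $u$, and from $v$ to $v+u$. Since $\psi$ is differentiable, the fundamental theorem of calculus applied to the map $t \mapsto \psi(v+tu)$ on $[0,1]$ gives
\[
\psi(u+v)-\psi(v) \;=\; \int_0^1 \ip{\nabla\psi(v+tu)}{u}\,\d t,
\]
and applying it to $t \mapsto \psi(tu)$ together with $\psi(0)=0$ gives $\psi(u) = \int_0^1 \ip{\nabla\psi(tu)}{u}\,\d t$.

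Next I would invoke the non-decreasing-gradient hypothesis (which is what the paragraph heading refers to, and which is genuinely needed here --- niceness alone fails, e.g.\ for $\psi(u_1,u_2)=\max\{u_1,u_2\}$). For every $t\in[0,1]$ we have $v+tu \ge tu$ coordinate-wise because $v \ge 0$, so $\nabla\psi(v+tu) \ge \nabla\psi(tu)$ coordinate-wise; since $u \ge 0$ this yields $\ip{\nabla\psi(v+tu)}{u} \ge \ip{\nabla\psi(tu)}{u}$ for all $t$. Integrating this pointwise inequality over $[0,1]$ and using the two identities above gives $\psi(u+v)-\psi(v) \ge \psi(u)$, i.e.\ $\psi(u+v) \ge \psi(u)+\psi(v)$.

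I do not expect a real obstacle: the argument is essentially the one behind Lemma 9 of~\cite{onlinePDICALP2}, and the only points to verify are that differentiability (part of niceness) licenses the use of the fundamental theorem of calculus, and that the integrand is measurable, which follows from continuity of $\nabla\psi$. As a sanity check one can note that in dimension $m=1$ superadditivity already follows from convexity and $\psi(0)=0$ alone (write $a = \tfrac{a}{a+b}(a+b)$, so $\psi(a) \le \tfrac{a}{a+b}\psi(a+b)$, similarly for $b$, and add); the role of monotone gradients is precisely to recover this in $\R^m$, which the path-integral comparison above does directly.
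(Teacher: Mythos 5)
Your proof is correct. The paper gives no in-text argument for this lemma (it defers to Lemma 9 of~\cite{onlinePDICALP2}), and your path-integral comparison is precisely the standard argument being cited: write $\psi(u+v)-\psi(v)=\int_0^1 \ip{\nabla\psi(v+tu)}{u}\,\d t$ and $\psi(u)=\int_0^1 \ip{\nabla\psi(tu)}{u}\,\d t$, then compare integrands via $\nabla\psi(v+tu)\ge\nabla\psi(tu)$ and $u\ge 0$. You are also right that the monotone-gradient hypothesis (stated in the surrounding paragraph but omitted from the lemma's wording) is genuinely needed and not implied by niceness alone; the only nitpick is that $\max\{u_1,u_2\}$ is not differentiable and hence not ``nice'' as defined, but a smoothed variant such as $\psi(u)=\log(e^{u_1}+e^{u_2})-\log 2$ is nice and fails superadditivity (take $u=(1,0)$, $v=(0,1)$), so your point stands.
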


%######################################################	
%######################################################
%######################################################
%######################################################

	\section{Online Convex Programming}

	Here we consider the problem \ocp described in Section \ref{sec:results}, and prove Theorem \ref{thm:main}. The idea for our algorithm is the following. Using the Fenchel conjugacy of \eqref{eq:fenchel}, we can see our problem as the minimax one\footnote{This is just for motivating the algorithm, we do not need to formally invoke a minimax theorem to justify the exchange of $\sup_{y \in \R^m_+}$ and $\min_{(v_t) \in (V_t)}$.}
	\begin{align*}
		\min_{(v_t) \in (V_t)} \psi\left(\sum_t v_t\right) = \sup_{y \in \R^m_+} \min_{(v_t) \in (V_t)}  \left[ \sum_t \ip{y}{v_t} - \sum_t \frac{1}{n} \psi^*(y) \right].
	\end{align*}	
	Given this, the algorithm is by now natural: it is a primal-dual algorithm which computes $v_t$'s (primal) as well as approximations $y_t$'s for the ``right'' $y$ (dual) in an online fashion, using a low regret Online Convex Optimization strategy. More precisely, define the ``Lagrangian''
	\begin{align}
		L(y, v) := \ip{y}{v} - \frac{1}{n}\,\psi^*(y). \label{eq:L}
	\end{align}
	We can think of $L(y,v)$ as the ``fake cost'' at time $t$ if the algorithm plays $v$ obtained by roughly linearizing the objective function $\psi$ with the ``dual'' $y$. The algorithm is then the following. 
	
	\begin{algorithm}[H]
  \caption{Algorithm \alg for \ocp}
  \begin{algorithmic}[1]
  	\For{each time $t$}
		\State (Dual) Compute $\bar{y}_t \in \R^m_+$ feeding  functions $L(\cdot,\bar{v}_1),\ldots,L(\cdot,\bar{v}_{t-1})$ to the OCO algorithm \OCOalg
		\State (Primal) Compute $\bar{v}_t$ as best response for the ``fake cost'': $\bar{v}_t = \argmin_{v_t \in V_t} L(\bar{y}_t,v_t)$
		\EndFor
  \end{algorithmic}
  \label{alg:cvxProg}
\end{algorithm}	

	To get a better intuition for the algorithm, and the type of properties we are looking for in the subroutine \OCOalg, we start with an informal sketch of the analysis. Recalling the notation from Section \ref{sec:results}, for an adversarial time $t \in Adv$ let $v^*_t$ be the choice made by the optimal solution for $Adv$, and for a stochastic time $t \in Stoch$ let $v_t^* := v^*(V_t)$ be the (random) choice made by the optimal selector $v^*$ for $Stoch$. Also let $\vOPT_{Adv} := \sum_{t \in Adv} v^*_t$ and $\vOPT_{Stoch} := \sum_{t \in Stoch} v^*_t$ be the optimal load on the adversarial and stochastic parts. We want to show that the \alg's cost is comparable to that of these loads:  $$\E \psi\bigg(\sum_t \bar{v}\bigg) ~\lesssim~ approx_{Adv} \cdot \psi(\vOPT_{Adv}) ~+~ approx_{Stoch} \cdot \E \psi(\vOPT_{Stoch}).$$
	
	Since the algorithm makes the decisions $\bar{v}_t$ based on the fake costs $L(\bar{y}_t, \bar{v}_t)$, we need to ensure that they reflect the actual cost. For that we need the $\bar{y}_t$'s computed by \OCOalg to be the ``right slope''. More precisely, we want them to have \emph{low regret} with respect to the functions $L(\cdot, \bar{v}_t)$, namely to satisfy
	\begin{align}
		\underbrace{\sum_t L(\bar{y}_t, \bar{v}_t)}_{\textrm{alg's fake cost}} \stackrel{\textrm{regret}}{\gtrsim} \sup_{y \in \R^m_+} \sum_t L(y, \bar{v}_t) = \sup_{y \in \R^m_+} \bigg[\ip{y}{{\textstyle \sum_t \bar{v}_t}} - \psi^*(y)\bigg] = \underbrace{\psi\bigg(\sum_t \bar{v}_t\bigg)}_{\textrm{alg's cost}},  \label{eq:premain}
	\end{align}
	where the loss in the first inequality is what we are informally calling regret. Assuming such regret guarantee is available, using the best response of $\bar{v}_t$ we informally get:
	%
%	\begin{empheq}[box=\widefbox]{align}
	\begin{align}
		\underbrace{\psi\bigg(\sum_t \bar{v}_t\bigg)}_{\textrm{alg's cost}} \stackrel{\textrm{regret}}{\lesssim} \underbrace{\sum_t L(\bar{y}_t, \bar{v}_t)}_{\textrm{alg's fake cost}} \stackrel{\textrm{best resp}}{\le} \underbrace{\sum_t L(\bar{y}_t, v^*_t)}_{\textrm{$\OPT$'s fake cost}} \stackrel{\star}{\lesssim} \textrm{\OPT's cost}. \label{eq:main}
	\end{align}
	That last inequality needs to be clarified and justified: Why are the algorithm's slopes $\bar{y}_t$ the ``right ones'' for linearizing the unknown $\OPT$? This is more subtle and will depend on the stochastic and adversarial parts of the model. 

 We now make this discussion formal.

%##########################################################
%##########################################################	

	\subsection{Regret part of the analysis} \label{sec:regretPart}
	
	To prove the first inequality in \eqref{eq:main}, we state precisely the type of guarantee needed from the OCO algorithm \OCOalg. We actually work with the slightly more general Lagrangian function
	\begin{align}
		L_\gamma(y, v) := \ip{y}{v} - \gamma\,\psi^*(y)
	\end{align}
	where the multiplier in the last term is parametrized by $\gamma$ instead of being always $\frac{1}{n}$ (though it is instructive to think throughout that $\gamma = \frac{1}{n}$). But we will only consider sequences of multipliers $\gamma_t$ that add up to exactly 1. 
	
	We briefly recall the OCO game~\cite{OCObook}, simplified for our needs. At time $t$, the OCO algorithm needs to produce a vector $y_t \in \R^m_+$ based only on the functions $L_{\gamma_1}(\cdot, v_1), \ldots, L_{\gamma_{t-1}}(\cdot, v_{t-1})$ seen up to the previous time step. It then sees the full reward function $L_{\gamma_t}(\cdot, v_t)$ for this time, and receives reward $L_{\gamma_t}(y_t, v_t)$. The standard goal in the OCO game is obtain reward comparable to using the best vector $y^* \in \R^m_+$ in all time steps (considering $\sum_t \gamma_t = 1$)
	\begin{align*}
		\textrm{reward OCO algo} = \sum_t L_{\gamma_t}(y_t, v_t) \gtrsim  \sup_{y \in \R^m_+} \bigg[\ip{y}{{\textstyle \sum_t v_t}} - \psi^*(y)\bigg] = \psi\bigg(\sum_t v_t\bigg).
	\end{align*}
	But as mentioned before, in our context it will be also crucial to control the ``size'' of the iterates $y_t$ produces, which will play a crucial role in analyzing Algorithm \ref{alg:cvxProg} in the adversarial part of the input. The right notion turns out to be essentially making sure that the ``size'' $\psi^*(y_t)$ is always comparable to the total reward obtained. It will be also important to make sure that the sequence $(y_t)_t$ is almost increasing: if that was the case, we could majorize (in all coordinates simultaneously) these iterates by the last one and then ``factor it out'' so we can add up the cost of all the adversarial time steps (see Lemma \ref{lemma:seqAdv}). In fact, it will be enough if there is a small set of iterates that dominates all the other ones.  
		
%	a crucial property is that the dual sequence $(\bar{y}_t)_t$ needs to be non-decreasing: in the adversarial part it allows us to majorize (in all coordinates simultaneously) these iterates by the last one and then ``factor it out'' so we can add up the cost of all the adversarial time steps (see Lemma \ref{lemma:seqAdv}). Since this seems hard while maintaining good regret with respect to the gain functions $L(\cdot, \bar{v}_t)$, we will instead partition time into phases and only guarantee monotonicity within each phase.
	
	We now state the guarantees of our OCO algorithm, but to reduce context-switching we postpone its description and analysis to Section \ref{sec:OCO}. We use $\bigvee_t y_t$ to denote the pointwise maximum of the collection of vectors $\{y_t\}_t$, i.e., the $i$th coordinate of $\bigvee_t y_t$ equals $\max_t (y_t)_i$.
	
	\begin{thm}[Guarantee of \OCOalg] \label{thm:OCO}
		 Assume that $\psi : \R^m_+ \rightarrow \R_+$ satisfies the hypothesis of Theorem~\ref{thm:main}. Consider an instance $(L_{\gamma_t}(\cdot,v_t))_t$ of the OCO game described above, with $v_t \in [0,1]^m$. Assume that the $\gamma_t$'s take value either 0 or $\bar{\gamma} \le \frac{1}{4p}$, and $\sum_t \gamma_t = 1$. 
%		Consider any sequence $\bar{v}_1,\ldots,\bar{v}_n$ in $[0,1]^m$.
	 Then the \OCOalg produces a solution $(\bar{y}_t)_t$ with the following properties:
		\begin{enumerate}
			\item (Regret) 
			 $$\sum_t L_{\gamma_t}(\bar{y}_t\,,\,\tfrac{1}{2}v_t) ~\ge~ \psi\bigg(\frac{1}{8} \sum_t v_t\bigg) - \psi(p \ones)$$
			\item (Size control of the iterates) 
				\begin{align*}
					\frac{1}{p} \cdot \max_t \, \psi^*(\bar{y}_t) ~\le~ \sum_{t} \ip{\bar{y}_{t}}{v_{t}} ~+~ \psi(p\ones)
				\end{align*}
				\item (Improved size control in separable case) If in addition the function $\psi$ is separable, we have
				\begin{align*}
					\frac{1}{p} \cdot \psi^*\bigg( \bigvee_t \bar{y}_t\bigg) ~\le~ \sum_{t} \ip{\bar{y}_{t}}{v_{t}}  ~+~ \psi(p\ones) 
				\end{align*}
				
				\item (Almost monotone) There is a set $\mathcal{T} \subseteq [n]$ of size $p$ such that $$\textrm{for every $t$, there is $t' \in \mathcal{T}$ such that ~$\bar{y}_t \le e \cdot \bar{y}_{t'}$}.$$
		\end{enumerate}
	\end{thm}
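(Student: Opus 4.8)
The plan is to realize \OCOalg as a Follow-the-Regularized-Leader strategy regularized by a constant multiple of the conjugate itself, run on a mildly shifted version of the observed vectors. Concretely, at time $t$ the algorithm would play the maximizer over $y\ge 0$ of $\sum_{s<t}L_{\gamma_s}(y,v_s)-\lambda\,\psi^*(y)$ for a fixed weight $\lambda=\Theta(1)$, where the vectors are fed after prepending a phantom ``shift'' vector of size $\Theta(p)\ones$ to the partial sums (this shift is what produces the additive $\psi(p\ones)$ term). By Fenchel duality and Lemma~\ref{lemma:basicFenchel} --- using $\argmax_{y\ge 0}\big(\langle y,u\rangle-c\,\psi^*(y)\big)=\nabla\psi(u/c)$ --- the iterate has the closed form $\bar y_t=\nabla\psi(u_t/d_t)$, where $u_t$ is the (shifted) partial sum of the $v_s$, $s<t$, and $d_t=\lambda+\sum_{s<t}\gamma_s$. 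I would then establish the four properties in the order (4)$\to$(2),(3)$\to$(1): the almost-monotone structure is what allows one to compare the various iterates, that comparison is what the size-control bounds need, and size control is in turn what pays for the error terms in the regret bound.

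Property~(4) is the cleanest. The numerator $u_t$ is coordinate-wise non-decreasing in $t$, while the denominator $d_t$ is a \emph{scalar} that is non-decreasing and changes only at the $\lfloor 1/\bar\gamma\rfloor$ steps with $\gamma_t=\bar\gamma$; hence $u_t/d_t$, and therefore $\bar y_t=\nabla\psi(u_t/d_t)$ since $\nabla\psi$ is monotone, can only shrink at those steps. As $d_t$ stays in $[\lambda,\lambda+1]$, its total multiplicative spread is the constant $\tfrac{\lambda+1}{\lambda}$; picking $\lambda$ so that $\tfrac{\lambda+1}{\lambda}\le e$, I would partition $[n]$ into at most $p$ blocks on each of which $d_t$ grows by a factor at most $e^{1/(p-1)}$. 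Within a block, $u_s/d_s\le u_t/d_s\le (d_t/d_s)(u_t/d_t)\le e^{1/(p-1)}(u_t/d_t)$ coordinate-wise whenever $s<t$, and likewise every iterate in the block is dominated coordinate-wise, up to that factor, by the one at the block's last step; feeding this through the growth inequality $\nabla\psi(\gamma u)\le\gamma^{p-1}\nabla\psi(u)$ turns $e^{1/(p-1)}$ into $e$. Taking $\mathcal T$ to be the set of block endpoints proves~(4).

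For the size-control properties (2) and (3) I would combine three ingredients. First, optimality of $\bar y_t$ against the feasible point $y=0$ gives $\sum_{s<t}\langle\bar y_t,v_s\rangle\ge d_t\,\psi^*(\bar y_t)\ge\lambda\,\psi^*(\bar y_t)$, so the ``size'' $\psi^*(\bar y_t)$ is controlled by a partial sum $\sum_{s<t}\langle\bar y_t,v_s\rangle$. Second, a convexity/Bregman telescoping along the trajectory --- comparing $\langle\bar y_s,v_s\rangle$ with the increment of the primal potential $s\mapsto\psi(u_s/d_s)$ --- shows $\sum_s\langle\bar y_s,v_s\rangle\gtrsim\psi(u_n/d_n)-O(\psi(p\ones))$, after which Lemma~\ref{lemma:growth}(c), $\psi^*(\nabla\psi(w))\le p\,\psi(w)$, supplies the factor $p$ and closes the loop between $\psi^*(\bar y_t)$ and $\sum_s\langle\bar y_s,v_s\rangle$, with the shift contributing the additive $\psi(p\ones)$. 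Third, property~(4) is invoked to pass between the different iterates appearing in these two estimates, so that only $\max_t\psi^*(\bar y_t)$, rather than a sum over $t$, survives on the left. In the separable case $\psi^*(y)=\sum_i\psi^*_i(y_i)$ and the whole argument runs one coordinate at a time, which is exactly what upgrades $\max_t\psi^*(\bar y_t)$ to $\psi^*\big(\bigvee_t\bar y_t\big)$.

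For the regret (property~(1)) I would run the usual FTRL accounting --- the ``be the leader'' inequality plus the step-to-step look-ahead error. The leader term is $\max_{y\ge 0}\big[\langle y,\sum_t v_t\rangle-(\lambda+1)\psi^*(y)\big]=(\lambda+1)\,\psi\big(\tfrac{\sum_t v_t}{\lambda+1}\big)$, which since $\lambda=\Theta(1)$ is at least $\psi(\tfrac14\sum_t v_t)$ by the growth inequality, while the boundary term $\psi^*(\bar y_1)$ is $O(\psi(p\ones))$ by the choice of shift; the look-ahead error, and the gap between $L_{\gamma_t}(\bar y_t,v_t)$ and $L_{\gamma_t}(\bar y_t,\tfrac12 v_t)$, are then absorbed using the size control of~(2), shrinking $\psi(\tfrac14\sum_t v_t)$ to $\psi(\tfrac18\sum_t v_t)$ at the cost of another $O(\psi(p\ones))$. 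The hypothesis $\bar\gamma\le\tfrac1{4p}$ is used here to ensure that no single $\gamma_t\psi^*$ term exceeds an $O(1/p)$ fraction of the budget $d_t\in[\lambda,\lambda+1]$, so consecutive iterates differ by only a $(1+O(1/p))$ factor in their arguments and the look-ahead error does not compound. The step I expect to fight hardest with is calibrating $\lambda$ (and the shift) against all these demands at once: property~(4) wants $\lambda$ large enough that $\tfrac{\lambda+1}{\lambda}\le e$ so that $d_t$ fits into $\le p$ geometric buckets, whereas the leader term is shrunk by the factor $(\lambda+1)^{-(p-1)}$, affordable only because the target $\psi(\tfrac18\sum_t v_t)$ is itself a factor $\sim 8^{-p}$ below $\psi(\sum_t v_t)$ --- so $\lambda$ must live in a narrow window, and the factor-$\tfrac12$ slack in property~(1) must be charged to the size-control bound rather than to the inside of $\psi$. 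Getting every estimate to land simultaneously on the stated $\tfrac18$, $\psi(p\ones)$, factor $e$, and set-size $p$ is the delicate part, and is precisely what the shift device and the coupling $\bar\gamma\le\tfrac1{4p}$ are engineered to make possible.
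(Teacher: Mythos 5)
Your algorithm is essentially the paper's \OCOalg (FTRL with a constant-weight $\psi^*$ regularizer and a $\Theta(p)\ones$ shift, giving iterates $\bar y_t=\nabla\psi(u_t/d_t)$), and your treatments of Item 4 (geometric bucketing of the scalar denominator $d_t$, then the growth bound on $\nabla\psi$ turning $e^{1/(p-1)}$ into $e$) and of Item 1 (be-the-leader plus one-step multiplicative stability, with the shift and $\bar\gamma\le\frac1{4p}$ making consecutive arguments differ by $1+O(1/p)$) follow the paper's route.

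The genuine gap is in your plan for Items 2--3. First, the optimality-against-$y=0$ inequality as you display it is wrong: it gives $\ip{\bar y_t}{u_t}\ge d_t\,\psi^*(\bar y_t)$ with $u_t$ \emph{including} the shift, so a term $\ip{\bar y_t}{\Theta(p)\ones}$ is missing, and absorbing it via Fenchel costs another $\psi^*(\bar y_t)$ plus $\psi(\Theta(p)\ones)$, which is $C^p\psi(p\ones)$ unless the shift-to-weight ratio is calibrated to be exactly $p\ones$ (in the paper: shift $4p\ones$, weight $4$). Second, and more seriously, the quantities you control do not convert into the theorem's right-hand side: the $y=0$ step bounds $\psi^*(\bar y_t)$ by the \emph{off-diagonal} sum $\sum_{s<t}\ip{\bar y_t}{v_s}$, and your telescoping is stated at the \emph{full horizon}, $\sum_s\ip{\bar y_s}{v_s}\gtrsim\psi(u_n/d_n)$. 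To reach $\max_t\psi^*(\bar y_t)$ you must compare $\psi(u_t/d_t)$ with $\psi(u_n/d_n)$, which costs $\big(\tfrac{\lambda+1}{\lambda}\big)^p=C^p$, or dominate $\bar y_t$ by earlier iterates via Item 4, which goes in the wrong direction (the anchors dominate, they are not dominated). A $C^p$ loss in Item 2 is fatal downstream: in the proof of Theorem~\ref{thm:main} the constant $\tfrac1p$ in Item 2 is what allows taking $\alpha=\Theta(p^2)$ in Lemma~\ref{lemma:seqAdv}, and weakening it to $\tfrac{1}{pC^p}$ would force $\alpha=\Theta(p^2C^p)$ and an approximation factor $C^{p^2}$ instead of $O(p)^{2p}$. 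The fix is the paper's: prove the regret inequality for every \emph{prefix} $t$ (the be-the-leader telescoping gives this for free), observe $L_{\gamma_{t'}}(\bar y_{t'},\tfrac12 v_{t'})\le\ip{\bar y_{t'}}{v_{t'}}$ and extend the prefix sum to all $t$ by nonnegativity, and bound $\psi^*(\bar y_t)=\psi^*(\nabla\psi(\bar w_t))\le p\,\psi(\bar w_t)$ by Lemma~\ref{lemma:growth}(c) at the \emph{same} prefix point; the $y=0$ trick and Item 4 are not needed here at all. Relatedly, your dependency chain is inverted: the regret bound does not need Item 2 (one-step stability suffices, since the statement already has $\tfrac12 v_t$ on the left), whereas Items 2--3 are most cleanly derived \emph{from} the prefix regret bound; as proposed, routing the look-ahead error through Item 2 while Item 2 relies on the telescoping risks circularity and, at minimum, does not deliver the stated constants $\tfrac18$, $\tfrac1p$, and the additive $\psi(p\ones)$.
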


	Since our primal-dual Algorithm \ref{alg:cvxProg} uses \OCOalg over the sequence of Lagrangian functions $L(\cdot, \bar{v}_t)$, the previous theorem allows us to relate the algorithms real and fake costs, giving the first inequality in \eqref{eq:main}.
	
	\begin{cor} \label{cor:cor}
		The cost of Algorithm \ref{alg:cvxProg} satisfies
		\begin{align*}
			\psi\bigg(\frac{1}{8} \sum_t \bar{v}_t\bigg) ~\le~ \sum_t L(\bar{y}_t\,,\bar{v}_t) - \frac{1}{2p} \cdot \max_t \psi^*(\bar{y}_t) + \frac{3}{2} \psi(p \ones).
		\end{align*}
		Moreover, if the function $\psi$ is separable, the second term in the right-hand side can be replaced by $\frac{1}{2p} \psi^*(\bigvee_t \bar{y}_t)$.
	\end{cor}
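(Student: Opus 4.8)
\textbf{Proof proposal for Corollary \ref{cor:cor}.}
The plan is to simply instantiate Theorem \ref{thm:OCO} on the exact sequence of Lagrangians that \OCOalg is fed inside Algorithm \ref{alg:cvxProg}, and then do a short rearrangement. Concretely, I would take $v_t := \bar{v}_t$ for every $t$ (which lies in $[0,1]^m$ since $V_t \subseteq [0,1]^m$) and $\gamma_t := \tfrac{1}{n}$ for every $t$, so that $L_{\gamma_t}(\cdot,\bar v_t) = L(\cdot,\bar v_t)$ is precisely the function sequence in line 2 of Algorithm \ref{alg:cvxProg}. The hypotheses of Theorem \ref{thm:OCO} are then met: the $\gamma_t$ are all equal to $\bar\gamma = \tfrac1n$, they sum to $1$, and $\bar\gamma \le \tfrac{1}{4p}$ under the operative assumption $n \ge 4p$; also $\psi$ satisfies the standing hypotheses. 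This hands us the iterates $(\bar y_t)_t$ together with properties 1--4.

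The main computation is to convert the regret statement (property 1), which is phrased at the scaled point $\tfrac12 \bar v_t$, into the claimed bound at $\bar v_t$. The key algebraic identity is that $L$ is affine in its second argument up to the $\psi^*$ term: for each $t$,
\begin{align*}
	L(\bar y_t, \tfrac12 \bar v_t) ~=~ \tfrac12 \ip{\bar y_t}{\bar v_t} - \tfrac1n \psi^*(\bar y_t) ~=~ L(\bar y_t, \bar v_t) - \tfrac12 \ip{\bar y_t}{\bar v_t}.
\end{align*}
Summing over $t$ and plugging into property 1 of Theorem \ref{thm:OCO} gives
\begin{align*}
	\psi\bigg(\tfrac18 \sum_t \bar v_t\bigg) - \psi(p\ones) ~\le~ \sum_t L(\bar y_t, \bar v_t) - \tfrac12 \sum_t \ip{\bar y_t}{\bar v_t}.
\end{align*}
Now I would invoke property 2 (size control), $\tfrac1p \max_t \psi^*(\bar y_t) \le \sum_t \ip{\bar y_t}{\bar v_t} + \psi(p\ones)$, in the form $-\tfrac12 \sum_t \ip{\bar y_t}{\bar v_t} \le -\tfrac{1}{2p}\max_t \psi^*(\bar y_t) + \tfrac12 \psi(p\ones)$, and substitute it into the previous display. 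Moving the $\psi(p\ones)$ terms to the right-hand side collects an additive $\tfrac32 \psi(p\ones)$ and yields exactly the stated inequality.

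For the separable case I would run the identical argument but replace the use of property 2 by property 3 of Theorem \ref{thm:OCO}, i.e. $\tfrac1p \psi^*\big(\bigvee_t \bar y_t\big) \le \sum_t \ip{\bar y_t}{\bar v_t} + \psi(p\ones)$, which simply swaps $\max_t \psi^*(\bar y_t)$ for $\psi^*\big(\bigvee_t \bar y_t\big)$ in the bound on $-\tfrac12\sum_t \ip{\bar y_t}{\bar v_t}$. There is no real obstacle here: the content is entirely in Theorem \ref{thm:OCO} (whose proof is deferred to Section \ref{sec:OCO}), and the only things to be careful about in writing the corollary are (i) checking that the parameters $v_t = \bar v_t$, $\gamma_t = \tfrac1n$ satisfy the theorem's hypotheses, and (ii) tracking the additive $\psi(p\ones)$ terms so the final constant comes out to $\tfrac32$.
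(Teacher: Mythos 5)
Your proposal is correct and follows essentially the same route as the paper: the identity $L(\bar{y}_t,\tfrac12 \bar{v}_t) + \tfrac12\ip{\bar{y}_t}{\bar{v}_t} = L(\bar{y}_t,\bar{v}_t)$, Item 1 of Theorem \ref{thm:OCO} plus half of Item 2 (or Item 3 in the separable case), and a rearrangement yielding the additive $\tfrac32\psi(p\ones)$. Your extra remark that $\gamma_t = \tfrac1n$ requires $n \ge 4p$ for the hypothesis $\bar\gamma \le \tfrac{1}{4p}$ is a fair observation (the paper acknowledges a similar restriction only implicitly), but it does not change the argument.
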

	
	\begin{proof}
		Notice that $L(\bar{y}_t\,,\frac{1}{2} \bar{v}_t) + \frac{1}{2} \ip{\bar{y}_t}{\bar{v}_t} = L(\bar{y}_t\,,\bar{v}_t)$. The result for non-separable functions then follows by adding Item 1 and half of Item 2 of Theorem \ref{thm:OCO} and rearranging the terms. The result for separable functions simply uses Item 3 instead of Item 2. 
	\end{proof}
	
	Moreover, since by the best response property of $\bar{v}_t$ \alg's fake cost $L(\bar{y}_t,\bar{v}_t)$ is at most $\OPT$'s fake cost $L(\bar{y}_t,v^*_t)$, we now upper bound the latter, formalizing the last inequality in \eqref{eq:main}. 

%#######################################################
%#######################################################
%#######################################################
%#######################################################
	
%	\subsection{Regret part of the analysis}
%
%	
%	\red{This gives inequality xx of eq (xx). Further using the best response, we get}
%		%
%		\begin{align}
%			\psi\bigg(\frac{1}{8} \sum_t \bar{v}_t\bigg) ~\le~ \sum_t L(\bar{y}_t\,,v^*_t) - \frac{1}{2p} \cdot \max_t \psi^*(\bar{y}_t) + \frac{3}{2} \psi(p \ones). \label{eq:corcor}
%		\end{align}

%#######################################################
%#######################################################
%#######################################################
%#######################################################

	\subsection{Relating $\OPT$'s fake and real cost}

	We consider the stochastic and adversarial time steps separately, starting with the stochastic part.

%#######################################################
%#######################################################
	
	%This argument is based on~\cite{guptaMolinaroMOR,AgrawalDevanur,molinaro}.
	
	\begin{lemma}[Stochastic part] \label{lemma:stoch}
		We have
		\begin{align*}
			\E \sum_{t \in Stoch} L(\bar{y}_t, v^*_t) ~\le~ \frac{1}{\beta}\,\psi\big(\beta\,\E \vOPT_{Stoch}\big), 
		\end{align*}
		where $\beta = \frac{n}{|Stoch|}$.
	\end{lemma}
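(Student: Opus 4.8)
The plan is to exploit that, for a stochastic time $t \in Stoch$, the dual iterate $\bar{y}_t$ produced by \OCOalg inside Algorithm~\ref{alg:cvxProg} is computed from the functions $L(\cdot,\bar{v}_1),\ldots,L(\cdot,\bar{v}_{t-1})$ only, hence is a measurable function of $V_1,\ldots,V_{t-1}$ together with the algorithm's internal coins; since the adversary is non-adaptive and the stochastic sets are drawn independently, $V_t$ --- and therefore $v^*_t = v^*(V_t)$ --- is independent of $\bar{y}_t$. Because the $v^*_t$, $t \in Stoch$, are i.i.d., they share a common mean $\mu := \E v^*_t = \tfrac{1}{|Stoch|}\,\E\vOPT_{Stoch}$, and the independence lets us replace $v^*_t$ by the deterministic vector $\mu$ inside the inner products.

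Concretely, I would first expand $L$ and use linearity of expectation,
\[
	\E \sum_{t \in Stoch} L(\bar{y}_t, v^*_t) ~=~ \sum_{t \in Stoch} \E\Big[\ip{\bar{y}_t}{v^*_t} - \tfrac{1}{n}\,\psi^*(\bar{y}_t)\Big],
\]
and then condition each term on the history $\mathcal{F}_{t-1}$ up to (but excluding) time $t$: by the independence above $\E[\ip{\bar{y}_t}{v^*_t}\mid \mathcal{F}_{t-1}] = \ip{\bar{y}_t}{\mu}$, while $\psi^*(\bar{y}_t)$ is $\mathcal{F}_{t-1}$-measurable, so each term equals $\E[\ip{\bar{y}_t}{\mu} - \tfrac{1}{n}\psi^*(\bar{y}_t)]$. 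Now apply the Fenchel inequality~\eqref{eq:fenchelIneq} (equivalently~\eqref{eq:fenchel}) at the point $n\mu$, namely $\ip{\bar{y}_t}{n\mu} - \psi^*(\bar{y}_t) \le \psi(n\mu)$, i.e. $\ip{\bar{y}_t}{\mu} - \tfrac{1}{n}\psi^*(\bar{y}_t) \le \tfrac{1}{n}\psi(n\mu)$, a deterministic bound. Summing the $|Stoch|$ terms gives $\tfrac{|Stoch|}{n}\,\psi(n\mu)$, and substituting $n\mu = \tfrac{n}{|Stoch|}\,\E\vOPT_{Stoch} = \beta\,\E\vOPT_{Stoch}$ and $\tfrac{|Stoch|}{n} = \tfrac{1}{\beta}$ yields $\tfrac{1}{\beta}\,\psi(\beta\,\E\vOPT_{Stoch})$, as claimed.

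The arithmetic is routine once the setup is in place; the only point needing care is the independence claim --- making precise that $\bar{y}_t$ lies in the $\sigma$-algebra generated by $\{V_s : s < t\}$ and the internal randomness, and that, because the adversary is non-adaptive and the stochastic $V_s$ are i.i.d., this $\sigma$-algebra is independent of $V_t$. (If the infimum defining $\OPT_{Stoch}$ is not attained by any selector, one runs the same argument with an $\eps$-optimal selector and lets $\eps \to 0$.) I expect this measurability/independence bookkeeping, rather than any inequality, to be the main --- and fairly modest --- obstacle; everything else is a one-line use of Fenchel conjugacy.
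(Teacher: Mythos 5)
Your proposal is correct and follows essentially the same route as the paper: condition on the history $\mathcal{F}_{t-1}$ so that $\bar{y}_t$ is fixed while $v^*_t$ retains its distribution with mean $\tfrac{1}{|Stoch|}\E\vOPT_{Stoch}$, then apply Fenchel conjugacy at the rescaled point $\beta\,\E\vOPT_{Stoch}$ and sum over the stochastic times. The extra measurability/independence bookkeeping and the $\eps$-optimal-selector remark are fine but not substantively different from the paper's argument.
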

	
	\begin{proof}
		Consider a stochastic time step $t$. Notice that both $\bar{y}_t$ and $v^*_t$ are random variables. Let $\F_{t-1}$ denote the $\sigma$-algebra generated by the history up to time $t-1$, i.e., by the sequence $(V_{t'})_{t' \in Stoch, t' \le t-1}$. Since $\bar{y}_t$ only depends on this history, conditioning on $\F_{t-1}$ fixes $\bar{y}_t$, but does not affect the distribution of $v^*_t$. In addition, notice that $\E v^*_t = \frac{1}{|Stoch|}\, \E \vOPT_{Stoch}$.	Therefore,
		\begin{align}
			\E\Big[ L(\bar{y}_t, v^*_t) ~\Big|~ \F_{t-1} \Big] &= \ip{\bar{y}_t\,}{\,\E[v^*_t \mid \F_{t-1}]} - \frac{1}{n} \psi^*(\bar{y}_t) \notag\\
	&= \ip{\bar{y}_t}{\E v^*_t} - \frac{1}{n} \psi^*(\bar{y}_t)\notag\\
	&= \frac{1}{n} \left(\ip{\bar{y}_t\,}{\beta\, \E \vOPT_{Stoch}} - \psi^*(\bar{y}_t) \right)\notag\\
	&\le \frac{1}{n}\, \psi\big(\beta\, \E \vOPT_{Stoch}\big),	\label{eq:stoch}
		\end{align}
		where the last inequality holds by the Fenchel conjugacy \eqref{eq:fenchel}. Adding over all stochastic time steps concludes the proof. 
		%Since $\psi$ is convex, Jensen's inequality further gives $\psi\big(\beta\, \E \vOPT_{Stoch}\big) \le \E \psi\big(\beta\, \vOPT_{Stoch}\big)$, and adding over all stochastic time steps concludes the proof. 
	\end{proof}

%#######################################################
%#######################################################
	
	%\subsubsection{Adversarial part}

	Now we consider the adversarial time steps. Unlike in the previous lemma, we cannot analyze each such time individually, since each by itself is not ``representative'' of $\vOPT_{Adv}$. It is here that we need an almost monotonicity property to ``factor the $\bar{y}_t$'s out''.
	
	\begin{lemma}[Adversarial part] \label{lemma:seqAdv}
	For all $\alpha \ge 1$
	\begin{align}
		\sum_{t \in Adv} L(\bar{y}_t, v^*_t) ~\le~ e \cdot \psi\big(\alpha \vOPT_{Adv}\big) \,+\, \frac{e p}{\alpha} \,\max_t\, \psi^*(\bar{y}_t) \label{eq:seqAdv1}
	\end{align}
	and
	\begin{align}
		\sum_{t \in Adv} L(\bar{y}_t, v^*_t) ~\le~ \psi\big(\alpha \vOPT_{Adv}\big) \,+\, \frac{1}{\alpha} \,\psi^*\bigg(\bigvee_t \bar{y}_t\bigg). \label{eq:seqAdv2}
	\end{align}
	\end{lemma}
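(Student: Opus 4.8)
The plan is to fix any $\alpha \ge 1$ and bound the "fake cost" $\sum_{t \in Adv} L(\bar y_t, v^*_t) = \sum_{t \in Adv}\big(\ip{\bar y_t}{v^*_t} - \tfrac1n \psi^*(\bar y_t)\big)$ by throwing away the (non-negative) penalty terms $-\tfrac1n\psi^*(\bar y_t)$ and estimating $\sum_{t\in Adv}\ip{\bar y_t}{v^*_t}$. The point of the two inequalities is that the adversarial times cannot be treated one-by-one against $\vOPT_{Adv}$, so we want to replace each $\bar y_t$ by a single (or a small number of) dominating vector, pull it out of the sum, and then hit $\sum_{t\in Adv} v^*_t = \vOPT_{Adv}$ with Fenchel's inequality \eqref{eq:fenchelIneq} — after rescaling by $\alpha$ to trade off the two resulting terms.

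\emph{Inequality \eqref{eq:seqAdv2} (the separable / pointwise-max version).} Let $\bar y := \bigvee_t \bar y_t$, so that $\bar y_t \le \bar y$ coordinatewise for every $t$; since all the $v^*_t$ lie in $[0,1]^m \subseteq \R^m_+$ this gives $\ip{\bar y_t}{v^*_t} \le \ip{\bar y}{v^*_t}$. Summing over $t \in Adv$ and dropping the penalties,
\begin{align*}
  \sum_{t \in Adv} L(\bar y_t, v^*_t) ~\le~ \sum_{t\in Adv}\ip{\bar y}{v^*_t} ~=~ \ip{\bar y}{\vOPT_{Adv}} ~=~ \ip{\tfrac1\alpha \bar y}{\,\alpha\vOPT_{Adv}} ~\le~ \psi(\alpha \vOPT_{Adv}) + \psi^*(\tfrac1\alpha \bar y),
\end{align*}
where the last step is \eqref{eq:fenchelIneq} applied to the pair $(\tfrac1\alpha\bar y,\ \alpha\vOPT_{Adv})$. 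Finally Lemma~\ref{lemma:growth}(b) with $\delta = 1/\alpha \in (0,1]$ gives $\psi^*(\tfrac1\alpha\bar y) \le \alpha^{-p/(p-1)}\psi^*(\bar y) \le \tfrac1\alpha\psi^*(\bar y)$ since $p/(p-1) \ge 1$, yielding \eqref{eq:seqAdv2}.

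\emph{Inequality \eqref{eq:seqAdv1} (the general "almost monotone" version).} Here we cannot use $\bigvee_t \bar y_t$, so we use Item~4 of Theorem~\ref{thm:OCO}: there is a set $\mathcal T$ with $|\mathcal T| = p$ such that every $\bar y_t$ is dominated by $e\,\bar y_{t'}$ for some $t' = t'(t) \in \mathcal T$. Partition $Adv$ into groups $Adv_{t'} := \{t \in Adv : t'(t) = t'\}$ for $t' \in \mathcal T$. For $t$ in group $t'$ we have $\ip{\bar y_t}{v^*_t} \le e\,\ip{\bar y_{t'}}{v^*_t}$, so summing within the group and using $\sum_{t \in Adv_{t'}} v^*_t \le \sum_{t\in Adv} v^*_t = \vOPT_{Adv}$ (coordinatewise, $v^*_t \ge 0$) together with $\bar y_{t'} \ge 0$,
\begin{align*}
  \sum_{t \in Adv_{t'}} \ip{\bar y_t}{v^*_t} ~\le~ e\,\ip{\bar y_{t'}}{\vOPT_{Adv}} ~=~ e\,\ip{\tfrac1\alpha\bar y_{t'}}{\,\alpha\vOPT_{Adv}} ~\le~ e\,\psi(\alpha\vOPT_{Adv}) + e\,\psi^*(\tfrac1\alpha \bar y_{t'}),
\end{align*}
again by \eqref{eq:fenchelIneq}, and then $\psi^*(\tfrac1\alpha\bar y_{t'}) \le \tfrac1\alpha\psi^*(\bar y_{t'}) \le \tfrac1\alpha\max_t\psi^*(\bar y_t)$ by Lemma~\ref{lemma:growth}(b) as above. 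Dropping the penalty terms and summing over the $|\mathcal T| = p$ groups multiplies both terms by $p$, giving $\sum_{t\in Adv} L(\bar y_t, v^*_t) \le e p\,\psi(\alpha\vOPT_{Adv}) + \tfrac{ep}{\alpha}\max_t\psi^*(\bar y_t)$. This is slightly weaker in the first term than \eqref{eq:seqAdv1} as written (which has $e$, not $ep$), so the real bookkeeping trick must be to \emph{not} apply Fenchel separately in each group but instead factor out $e$ once: bound $\ip{\bar y_t}{v^*_t} \le e\,\ip{\bar y_{t'(t)}}{v^*_t} \le e\,\ip{\,\bigvee_{s\in\mathcal T}\bar y_s\,}{v^*_t}$ — no, that reintroduces a pointwise max over $\mathcal T$ whose $\psi^*$ we cannot control in the non-separable case. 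The clean fix I expect the authors use: keep the groups, but apply \eqref{eq:fenchelIneq} to each group with $\vOPT_{Adv}$ replaced by the group's \emph{own} partial sum $u_{t'} := \sum_{t\in Adv_{t'}} v^*_t$, using superadditivity (Lemma~\ref{lemma:superadd}) in the form $\sum_{t'\in\mathcal T}\psi(\alpha u_{t'}) \le \psi(\alpha\sum_{t'} u_{t'}) = \psi(\alpha\vOPT_{Adv})$, which collapses the $p$ copies of the main term back into one. So: for each group, $\sum_{t\in Adv_{t'}}\ip{\bar y_t}{v^*_t} \le e\ip{\bar y_{t'}}{u_{t'}} = e\ip{\tfrac1\alpha\bar y_{t'}}{\alpha u_{t'}} \le e\psi(\alpha u_{t'}) + e\psi^*(\tfrac1\alpha\bar y_{t'}) \le e\psi(\alpha u_{t'}) + \tfrac{e}{\alpha}\max_t\psi^*(\bar y_t)$; summing over $\mathcal T$ and using Lemma~\ref{lemma:superadd} on the first terms and $|\mathcal T| = p$ on the second gives exactly \eqref{eq:seqAdv1}.

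\textbf{Main obstacle.} The only delicate point is the one just navigated: in the non-separable case we have no handle on $\psi^*$ of a pointwise max of several iterates, so the domination from Item~4 of Theorem~\ref{thm:OCO} must be exploited group-by-group, and the factor-$p$ blowup on the $\psi(\alpha\vOPT_{Adv})$ term must be avoided by invoking superadditivity (Lemma~\ref{lemma:superadd}) to recombine the per-group main terms — this is exactly why the hypothesis that $\nabla\psi$ is non-decreasing (hence $\psi$ superadditive) is needed here, and why the separable case can instead afford the cleaner single-vector argument of \eqref{eq:seqAdv2}.
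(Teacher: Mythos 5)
Your final argument is correct and is essentially identical to the paper's proof: both inequalities use the domination of the $\bar{y}_t$'s (by $\bigvee_t \bar{y}_t$, respectively by the $p$ iterates from Item~4 of Theorem~\ref{thm:OCO}), Fenchel's inequality applied group-by-group to the rescaled pair $(\bar{y}_{t_i}/\alpha,\,\alpha\sum_{t\in Adv_i}v^*_t)$, superadditivity (Lemma~\ref{lemma:superadd}) to recombine the per-group $\psi$ terms, and Lemma~\ref{lemma:growth}(b) to pull out the $1/\alpha$. The mid-proof detour where you first charged each group the full $\psi(\alpha\vOPT_{Adv})$ is exactly the pitfall the paper avoids in the same way you ultimately do, so no gap remains.
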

	
	\begin{proof}[Proof of \eqref{eq:seqAdv1}]
		Consider the set $\mathcal{T} = \{t_1,t_2,\ldots,t_p\}$ in Item 4 of the guarantee of \OCOalg from Theorem~\ref{thm:OCO}, and partition the adversarial time steps $Adv$ into sets $Adv_1,Adv_2,\ldots,Adv_p$ so that for all $t \in Adv_i$ we have the domination $\bar{y}_t \le e \cdot \bar{y}_{t_i}$.

	Since $\psi^*$ is non-negative, $L(\bar{y}_t,v^*_t)$, is at most $\ip{\bar{y}_t}{v^*_t}$, which equals $\ip{\bar{y}_t/\alpha}{\alpha v^*_t}$. Then using the domination above,
		\begin{align}
			\sum_{t \in Adv} L(\bar{y}_t, v^*_t) &\le \sum_{i \in [p]} \sum_{t \in Adv_i} \ip{\bar{y}_t/\alpha}{\, \alpha v^*_t} \le e \cdot \sum_{i \in [p]} \sum_{t \in Adv_i} \ip{\blue{\bar{y}_{t_i}}/\alpha\,}{\,\alpha v^*_t} = e \cdot \sum_{i \in [p]} \ip{\bar{y}_{t_i}/\alpha\,}{\, \alpha \sum_{t \in Adv_i} v^*_t}. \label{eq:seqAdvProof}
		\end{align}
		Applying Fenchel's inequality \eqref{eq:fenchelIneq} to each term of the right-hand side we obtain
		\begin{align}
			RHS &\le e \cdot \sum_{i \in [p]} \bigg[\psi\bigg(\alpha \sum_{t \in Adv_i} v^*_t\bigg) ~+~ \psi^*\bigg(\frac{1}{\alpha}\, \bar{y}_{t_i}\bigg) \bigg] \notag\\
			&\le e \cdot \psi\bigg(\alpha \sum_{t \in Adv} v^*_t\bigg) + ep \cdot \max_t \psi^*\bigg(\frac{1}{\alpha}\, \bar{y}_t\bigg) \label{eq:seqAdvProof2}\\
			&\le e \cdot \psi\bigg(\alpha \sum_{t \in Adv} v^*_t\bigg) + \frac{ep}{\alpha} \cdot \max_t \psi^*( \bar{y}_t),	\notag
		\end{align}
	where the second inequality uses the superadditivity of $\psi$ (Lemma \ref{lemma:superadd}), and the last inequality uses Lemma~\ref{lemma:growth}.(b). This concludes the proof of inequality \eqref{eq:seqAdv1}.
	\end{proof}
	
	\begin{proof}[Proof of \eqref{eq:seqAdv2}]
		The proof is a simplified version of the previous argument using the coarser domination $\bar{y}_t \le \bigvee_t \bar{y}_t$: inequality \eqref{eq:seqAdvProof} now becomes
		\begin{align*}
			\sum_{t \in Adv} L(\bar{y}_t, v^*_t) \le \ip{\bigvee_t \bar{y}_t}{\sum_{t \in Adv} v^*_t},
		\end{align*}
		and the argument from inequality \eqref{eq:seqAdvProof2} gives that the right-hand side is at most $\psi(\alpha \sum_{t \in Adv} v^*_t) + \frac{1}{\alpha} \psi^*(\bigvee_t \bar{y}_t).$ This concludes the proof. 
	\end{proof}
		
%#########################################################
%#########################################################

	\subsection{Concluding the proof of Theorem \ref{thm:main}}
	
	Now we just need to plug in the pieces above to conclude the proof of Theorem \ref{thm:main}. Since we get different guarantees depending on whether $\psi$ is separable/homogeneous or not, we need to consider these cases separately. 
	
	\subsubsection{Non-separable, non-homogeneous case} \label{sec:nonsep}
	
	Start from Corollary \ref{cor:cor}, use the fact that $L(\bar{y}_t\,,\,\bar{v}_t) \le L(\bar{y}_t\,,\,v^*_t)$ (since $\bar{v}_t$ is best response), and upper bound the costs $L(\bar{y}_t\,,\,v^*_t)$ using Lemmas \ref{lemma:stoch} and the first part of \ref{lemma:seqAdv} (with $\alpha = 2 e p^2$, so that the terms $\max_t \psi^*(\bar{y}_t)$ cancel out). This gives 
	\begin{align*}
		\E\, \psi\bigg(\frac{1}{8} \sum_t v_t\bigg) \le e \cdot \psi \big(2ep^2\,\vOPT_{Adv}\big) + \psi\big(\beta\,\E\vOPT_{Stoch}\big) + \frac{3}{2} \psi(p \ones).  %\label{eq:wrapMainPre}
	\end{align*}
	Applying the growth condition $\psi(\gamma z) \le \gamma^p \psi(z)$ (Lemma \ref{lemma:growth}.(a)) to pull out the multipliers and using the fact $\psi$ is convex plus Jensen's inequality to pull out the expectation further gives that
	\begin{align}
		\E\, \psi\bigg(\sum_t v_t\bigg) \le   O(p)^{2p}\, \psi \big(\vOPT_{Adv}\big) ~+~ O\big(\tfrac{n}{|Stoch|}\big)^p\cdot \E \psi\big(\vOPT_{Stoch}\big) ~+~ \frac{3}{2} \psi(p \ones). \label{eq:wrapMain0}
	\end{align}
	This essentially gives Theorem \ref{thm:main} in the non-separable case, other than we want to replace the term $O\big(\frac{n}{|Stoch|}\big)^p$ by $\min\{O(\frac{n}{|Stoch|})^p\,,\,O(p)^{2p}\}$. For that we can simply treat \emph{all} time steps as adversarial, in which case the previous bound gives (in each scenario)
	\begin{align}
		\psi\bigg(\sum_t v_t\bigg) \le O(p)^{2p} \psi \big(\vOPT_{Adv} + \vOPT_{Stoch}\big)~+~ \frac{3}{2}\psi(p \ones).	\label{eq:wrapMain1}
	\end{align}
	In addition, the convexity and growth upper bound on $\psi$ gives the following.
	
	\begin{lemma}
		For every $u,v \in \R^m_+$, we have $\psi(u + v) \le 2^{p-1} \psi(u) + 2^{p-1} \psi(v)$.
	\end{lemma}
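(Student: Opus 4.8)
The plan is to combine convexity of $\psi$ with the growth bound from Lemma~\ref{lemma:growth}.(a) applied at scale $\alpha = 2$. The key observation is that $u + v = 2\cdot\frac{u+v}{2}$, so we can first take the average of $u$ and $v$, bound $\psi$ on the average using convexity, and then inflate back by a factor of $2$ using the growth estimate.

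Concretely, I would proceed as follows. First, by convexity of $\psi$ together with $\psi(0)=0$ (more directly: by Jensen applied to the two points $u$ and $v$ with weights $\tfrac12,\tfrac12$),
\begin{align*}
\psi\bigg(\frac{u+v}{2}\bigg) ~\le~ \frac12\,\psi(u) + \frac12\,\psi(v).
\end{align*}
Second, since $2 \ge 1$, Lemma~\ref{lemma:growth}.(a) (with the vector $\frac{u+v}{2}$ and scalar $\alpha = 2$) gives
\begin{align*}
\psi(u+v) ~=~ \psi\bigg(2\cdot\frac{u+v}{2}\bigg) ~\le~ 2^p\,\psi\bigg(\frac{u+v}{2}\bigg).
\end{align*}
Chaining these two inequalities yields $\psi(u+v) \le 2^p\big(\tfrac12\psi(u) + \tfrac12\psi(v)\big) = 2^{p-1}\psi(u) + 2^{p-1}\psi(v)$, as claimed.

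There is essentially no obstacle here: the only things to check are that $u+v, u, v$ all lie in the domain $\R^m_+$ (immediate, since $\R^m_+$ is a convex cone), that the average $\frac{u+v}{2}$ is in the domain (same reason), and that the scalar $2$ satisfies the hypothesis $\alpha \ge 1$ of Lemma~\ref{lemma:growth}.(a). The proof is a one-liner once these are in place.
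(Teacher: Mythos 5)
Your proof is correct and uses essentially the same argument as the paper: convexity at the midpoint combined with Lemma~\ref{lemma:growth}.(a) at scale $2$. The only cosmetic difference is that you apply the growth bound once to $\frac{u+v}{2}$ before convexity, while the paper writes $u+v = \frac{1}{2}(2u+2v)$ and applies the growth bound to $\psi(2u)$ and $\psi(2v)$ after convexity; the two orderings are interchangeable.
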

	
	\begin{proof}
		Using convexity and the growth upper bound we have $$\psi(u+v) ~=~ \psi\bigg(\frac{1}{2}(2u + 2v)\bigg) ~\le~ \frac{1}{2} \psi(2u) + \frac{1}{2} \psi(2v) ~\le~ 2^{p-1} \psi(u) + 2^{p-1} \psi(v).$$
	\end{proof}
	
	Applying this to \eqref{eq:wrapMain1} and taking expectations we get 
	\begin{align*}
		\E \psi\bigg(\sum_t v_t\bigg) \le O(p)^{2p} \psi \big(\vOPT_{Adv}) ~+~ O(p)^{2p} \,\E \psi\big(\vOPT_{Stoch}\big)~+~ \frac{3}{2} \psi(p \ones).
	\end{align*}
	Combining this bound with the bound from \eqref{eq:wrapMain1} we conclude the proof of Theorem \ref{thm:main} in the non-separable case. 

%#########################################################
%#########################################################

	\subsubsection{Separable, non-homogeneous case} \label{sec:separable}

	As before, we start from the \emph{separable} version of Corollary \ref{cor:cor}, use the fact that $L(\bar{y}_t\,,\,\bar{v}_t) \le L(\bar{y}_t\,,\,v^*_t)$ (since $\bar{v}_t$ is best response), and upper bound the costs $L(\bar{y}_t\,,\,v^*_t)$ using Lemmas \ref{lemma:stoch} and the \emph{second} part of \ref{lemma:seqAdv}, \emph{but now we use $\alpha = 2p$} (so the terms $\psi^*(\bigvee_t \bar{y}_t)$ cancel out). This gives 
	\begin{align*}
		\E\, \psi\bigg(\frac{1}{8} \sum_t v_t\bigg) \le \psi \big(2p \,\vOPT_{Adv}\big) + \psi\big(\beta\,\E\vOPT_{Stoch}\big) + \frac{3}{2} \psi(p \ones).  
	\end{align*}
 Again applying the growth condition $\psi(\gamma z) \le \gamma^p \psi(z)$ (Lemma \ref{lemma:growth}.(a)) to pull out the multipliers and using the fact $\psi$ is convex plus Jensen's inequality to pull out the expectation gives that 
	\begin{align*}
		\E\, \psi\bigg(\sum_t v_t\bigg) \le O(p)^p\,\psi \big(\vOPT_{Adv}\big) ~+~ O\big(\tfrac{n}{|Stoch|}\big)^p\cdot\E\,\psi\big(\vOPT_{Stoch}\big) + \frac{3}{2} \psi(p \ones).   
	\end{align*}
	The term $O\big(\frac{n}{|Stoch|}\big)^p$ can again be replaced by $\min\{O(\frac{n}{|Stoch|})^p\,,\,O(p)^{2p}\}$ via the same argument as in the previous case. This proves Theorem \ref{thm:main} in the separable case.

%#########################################################
%#########################################################

	\subsubsection{Homogeneous case} \label{sec:homo}
	
	When $\psi$ is homogeneous we want to get guarantees for the non-separable/separable cases as above but with the approximation factor in the stochastic part being simply $O(1)^p$ instead of $\min\{O(\frac{n}{|Stoch|})^p\,,\,O(p)^{2p}\}$. Tracking down the factor $\frac{n}{|Stoch|}$ we see that it appeared in Lemma \ref{lemma:stoch} because of the term $\frac{1}{n} \psi^*(\bar{y}_t)$ in our Lagrangian $L(\bar{y}_t, \bar{v}_t)$. Indeed, if we had used the modified Lagrangian $\wtilde{L}(\bar{y}_t, \bar{v}_t) = \ip{\bar{y}_t}{\bar{v}_t} - \blue{\gamma_t} \psi^*(\bar{y}_t)$ where $\gamma_t = \frac{1}{|Stoch|}$ for the stochastic times and $\gamma_t = 0$ otherwise, the argument of Lemma \ref{lemma:stoch} would give $\E \sum_{t \in Stoch} \wtilde{L}(\bar{y}_t, \bar{v}_t) \,\le\, \E \psi(\vOPT_{Stoch})$, without the factor $\frac{n}{|Stoch|}$.
	
	The issue is that since the algorithm does not known which times are stochastic it cannot compute such $\gamma_t$'s and feed the modified Lagrangians $\wtilde{L}(\bar{y}_t, \bar{v}_t)$ to OCO algorithm \OCOalg. However, when $\psi$ is homogeneous \emph{the choices $\bar{v}_t$ made by Algorithm \ref{alg:cvxProg} are the same whether it feeds the regular or the modified Lagrangians to \OCOalg}. At a high-level this is because:\vspace{-3pt}
	
	\begin{enumerate}
		\item The iterates returned by \OCOalg are evaluations of the gradient of $\psi$ at different points \vspace{-3pt}
		
		\item Using the $\gamma_t$'s instead of $\frac{1}{n}$ only introduces a time-dependent scaling $\alpha_t$ of the point where $\nabla \psi$ is evaluated, i.e. $\nabla \psi(\alpha_t u)$ vs. $\nabla \psi(u)$ \vspace{-3pt}
		
		\item Since $\psi$ is homogeneous, the gradient $\nabla \psi$ is also homogeneous (of power $p-1$). Thus, we can pull out the $\alpha_t$, i.e., $\alpha_t^{(p-1)} \cdot \nabla \psi(u)$ vs. $\nabla \psi(u)$. \vspace{-3pt}
		
		\item Thus, changing the Lagrangian only rescales the iterates produced by \OCOalg, but this does not change the best response $\bar{v}_t$.
	\end{enumerate}
	
	Since making points 1 and 2 above formal involves getting inside algorithm \OCOalg, we postpone the details to Appendix \ref{app:homo}.
	
%#########################################################
%#########################################################
%#########################################################
%#########################################################

	\section{OCO algorithm \OCOalg} \label{sec:OCO}
	
	We finally describe the algorithm Shifted and Scaled Follow the Regularized Leader (\OCOalg), and prove its guarantees from Theorem \ref{thm:OCO}. While this section is self-contained and does not require background on Online Convex Optimization, it draws heavy inspiration from it, for which we refer the reader to~\cite{OCObook}.
% \OCOalg is based on the staple FTRL algorithm for OCO but with additional elements: an adaptive regularizer to ensure monotonicity within a phase, and shifting to ensure multiplicative stability (an important component in the analysis of FTRL-type algorithms~\cite{stabilityTewari}). 
	
	Recall that we want to compute $\bar{y}_t \in \R^m_+$ based only on the functions $L_{\gamma_1}(\cdot, v_1), \ldots, L_{\gamma_{t-1}}(\cdot, v_{t-1})$ seen up to the previous time step. A main goal is to maximize $\sum_t L_{\gamma_t}(\bar{y}_t,v_t)$, obtaining guarantees $\sum_t L_{\gamma_t}(\bar{y}_t,v_t) \gtrsim \sup_{y \in \R^m_+} \sum_t L_{\gamma_t}(y, v_t)$ (plus other properties). 

		To absorb some of the losses, define the scaled version of these functions 		
	 $$\wtilde{L}_t(y) := \ip{y}{v_t} - 4 \gamma_t \psi^*(y) ~=~ 4 \cdot L_{\gamma_t}(y\,,\, \tfrac{1}{4} v_t).$$ In addition, define a ``fake'' gain function at time 0: $$\wtilde{L}_0(y) := \ip{y}{4 p \ones} - 4 \psi^*(y).$$ Then \OCOalg is FTRL over the functions $\wtilde{L}_t$ with an additional regularizer $\bar{\gamma} \psi^*(y)$ (recall we assumed $\gamma_t$ is either 0 or $\bar{\gamma}$):

	\begin{algorithm}[H]
  \caption{\OCOalg}
  \begin{algorithmic}[1]
  	\State For time $t$, play
  	\vspace{-4pt}
		\begin{align}
\bar{y}_t &= \argmax_y \bigg(\wtilde{L}_0(y) + \ldots + \wtilde{L}_{t-1}(y) ~-~ \bar{\gamma} \psi^*(y)\bigg). \label{eq:mFTRLn} %\label{eq:mFTRL} 
%		& = \nabla \psi\bigg(\frac{p \ones + \bar{v}_1 + \ldots + \bar{v}_{t-1}}{2e \gamma_i} \bigg), \label{eq:OCO1}
		\end{align}
		%\vspace{2pt}
	\end{algorithmic}
  \label{alg:oco}
\end{algorithm}
	
	Let $v_{1:t} := v_1 + \ldots v_t$, and define $\gamma_{1:t}$ analogously. Using the fact that the gradient is the maximizer in the Fenchel conjugate (Lemma \ref{lemma:basicFenchel}.(b)) we can also see the iterate $\bar{y}_t$ as a gradient of $\psi$: 
	\begin{align}
		\bar{y}_t &= \argmax_y \bigg(\ipbig{y}{4p\ones + v_{1:t-1}} - 4\,\bigg(1 + \gamma_{1:t-1} + \bar{\gamma}  \bigg)\,\psi^*(y)\bigg) \notag\\
		&= \argmax_y \bigg(\ipbig{y}{\frac{4p\ones + v_{1:t-1}}{4(1 + \gamma_{1:t-1} + \bar{\gamma})}} ~-~ \psi^*(y)\bigg) \notag\\
		& = \nabla \psi\bigg(\frac{4p\ones + v_{1:t-1}}{4(1 + \gamma_{1:t-1} + \bar{\gamma})}\bigg). \label{eq:OCOn}
	\end{align}
	
	The fake gain $\wtilde{L}_0$ plays a crucial role in the stability of the algorithm, ensuring that (multiplicatively) $\bar{y}_{t+1} \approx \bar{y}_t$: Its term $4\psi^*(y)$ is the regularizer of the algorithm. In addition, its term $\ip{y}{4p\ones}$ ``shifts everything away from the origin''. To see why this is important for \emph{multiplicative} stability, from \eqref{eq:OCOn} we see that the iterates $\bar{y}_{t+1}$ and $\bar{y}_t$ are evaluations of the gradients of $\psi$ at ``nearby'' points (recall $v_t \in [0,1]^m$). However, for points very close to the origin, let alone the origin itself, these gradients can be extremely different multiplicatively, even in simple 1-dim functions like $\psi(u) = u^p$, where  $\lim_{\e \rightarrow 0} \frac{\psi'(1+\e)}{\psi'(\e)} = \infty$. Notice that at least for the function $\psi(u) = u^p$ this does not happen away from the origin: e.g., $\frac{\psi(p + 1)}{\psi(p)} = (1+\frac{1}{p})^p \le e$. This idea of shifting away from the origin is inspired in a similar strategy used in~\cite{molinaro} in the special case of controlling the gradients of the $\ell_p$ norms. 
	
	Now we analyze \OCOalg, proving Theorem \ref{thm:OCO}.
	
%###########################################################

	\paragraph{Item 1: Regret.} To bound the regret of \OCOalg we use a Be-the-Leader/Follow-the-Leader argument. It will be convenient to work with the Follow-the-Leader iterates
	\begin{align}
		\wtilde{y}_t ~:=~ \argmax_y \bigg(\wtilde{L}_0(y) + \ldots + \wtilde{L}_{t-1}(y)\bigg) ~=~ \nabla \psi\bigg(\frac{4p\ones + v_{1:t-1}}{4(1 + \gamma_{1:t-1})}\bigg),  \label{eq:FTL}
	\end{align}
	that is, we simply omit the additional regularized $\bar{\gamma} \psi^*(y)$ from \eqref{eq:mFTRLn}.
	
	We start with the Be-the-Leader argument, which states that playing the next iterate $\wtilde{y}_{t+1}$ at time $t$ (over all $t$) gives superoptimal gains with respect to the $\wtilde{L}_t$'s. We include a proof for completeness.
	
	\begin{lemma}[Be-the-Leader] \label{lemma:BTL}
		For all times $t$, $\sum_{t' = 0}^t \wtilde{L}_{t'}(\wtilde{y}_{t'+1}) \ge \max_y \sum_{t'=0}^t \wtilde{L}_{t'}(y)$.
	\end{lemma}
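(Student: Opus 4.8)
The plan is to prove the Be-the-Leader inequality by induction on $t$, which is the standard argument but phrased here with the ``time-0'' fake gain function $\wtilde{L}_0$ playing the role of the initial regularizer. The base case $t=0$ is immediate: $\wtilde{L}_0(\wtilde{y}_1) = \max_y \wtilde{L}_0(y)$ since $\wtilde{y}_1 := \argmax_y \wtilde{L}_0(y)$ by the definition in \eqref{eq:FTL}.

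For the inductive step, assume $\sum_{t'=0}^{t-1} \wtilde{L}_{t'}(\wtilde{y}_{t'+1}) \ge \max_y \sum_{t'=0}^{t-1} \wtilde{L}_{t'}(y)$. Adding $\wtilde{L}_t(\wtilde{y}_{t+1})$ to both sides gives
\begin{align*}
	\sum_{t'=0}^{t} \wtilde{L}_{t'}(\wtilde{y}_{t'+1}) ~\ge~ \Big(\max_y \sum_{t'=0}^{t-1} \wtilde{L}_{t'}(y)\Big) + \wtilde{L}_t(\wtilde{y}_{t+1}) ~\ge~ \sum_{t'=0}^{t-1} \wtilde{L}_{t'}(\wtilde{y}_{t+1}) + \wtilde{L}_t(\wtilde{y}_{t+1}) ~=~ \sum_{t'=0}^{t} \wtilde{L}_{t'}(\wtilde{y}_{t+1}),
\end{align*}
where the second inequality just plugs the particular point $y = \wtilde{y}_{t+1}$ into the max. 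But $\wtilde{y}_{t+1} = \argmax_y \sum_{t'=0}^{t} \wtilde{L}_{t'}(y)$ by the definition of the Follow-the-Leader iterate in \eqref{eq:FTL} (with index shifted by one), so the right-hand side equals $\max_y \sum_{t'=0}^{t} \wtilde{L}_{t'}(y)$, completing the induction.

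The only subtlety worth checking — and what I'd expect to be the one place the argument could go wrong — is that the argmax defining each $\wtilde{y}_{t+1}$ is actually attained, so that ``$\argmax$'' is well-defined and the chained (in)equalities make sense. This is handled by the structure already established in the excerpt: equation \eqref{eq:FTL} exhibits the maximizer explicitly as $\wtilde{y}_{t+1} = \nabla\psi\big(\tfrac{4p\ones + v_{1:t}}{4(1+\gamma_{1:t})}\big)$, which follows from Lemma~\ref{lemma:basicFenchel}.(b) (the supremum in the Fenchel representation \eqref{eq:fenchel} is achieved at the gradient). So attainment is not an issue, and the induction above is the whole proof; no further estimates are needed. (Note this lemma is only about the FTL iterates $\wtilde{y}_t$; converting it into a regret bound for the actual FTRL iterates $\bar{y}_t$ of \OCOalg, which carry the extra $\bar\gamma\psi^*$ regularizer, is the next step after this lemma and not part of the present statement.)
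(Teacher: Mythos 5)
Your proof is correct and is essentially the same argument as the paper's: the standard Be-the-Leader induction on $t$ using the optimality of the Follow-the-Leader iterate, just written starting from the left-hand side (invoking optimality of $\wtilde{y}_{t+1}$) rather than from the max (invoking optimality of $\wtilde{y}_{t+2}$), and with the base case and attainment of the argmax spelled out explicitly where the paper leaves them implicit.
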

	
	\begin{proof}
		By induction on $t$: assuming this holds for $t$, using the optimality of $\wtilde{y}_{t+2}$ we get
		\begin{align*}
			 \max_y \sum_{t'=1}^{t + 1} \wtilde{L}_{t'}(y) = \sum_{t' = 0}^{t + 1}  \wtilde{L}_{t'}(\wtilde{y}_{t+2}) \stackrel{\textrm{induction}}{\le} \sum_{t'=0}^{t} \wtilde{L}_{t'}(\wtilde{y}_{t'+1}) ~+~\wtilde{L}_{t+1}(\wtilde{y}_{t+2}), 
		\end{align*}
		concluding the induction. 
	\end{proof}
	
	Now we show that the iterate $\bar{y}_t$ that we construct for time $t$ is almost the same as $\wtilde{y}_{t+1}$.
	
	\begin{lemma} \label{lemma:stab}
		For all $t$ we have $\bar{y}_t \le \wtilde{y}_{t+1} \le 2 \,\bar{y}_t$.
	\end{lemma}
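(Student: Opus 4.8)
The plan is to use the closed-form gradient expressions \eqref{eq:OCOn} and \eqref{eq:FTL} and exploit monotonicity of $\nabla\psi$ together with the growth condition (Lemma \ref{lemma:growth}). Write $w := 4p\ones + v_{1:t-1}$, so that $\bar{y}_t = \nabla\psi\big(\tfrac{w}{4(1+\gamma_{1:t-1}+\bar{\gamma})}\big)$ and $\wtilde{y}_{t+1} = \nabla\psi\big(\tfrac{w}{4(1+\gamma_{1:t})}\big)$. Note that the argument of $\nabla\psi$ in both cases is the \emph{same} vector $w$ scaled by two different positive scalars; the point is to compare the two scalars and then transfer the comparison through $\nabla\psi$.

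First I would establish the inequality $\bar{y}_t \le \wtilde{y}_{t+1}$. The scalar multiplying $w$ in $\wtilde{y}_{t+1}$ is $\tfrac{1}{4(1+\gamma_{1:t})}$, and in $\bar{y}_t$ it is $\tfrac{1}{4(1+\gamma_{1:t-1}+\bar{\gamma})}$. Since $\gamma_t \le \bar\gamma$ (each $\gamma_t$ is either $0$ or $\bar\gamma$), we have $\gamma_{1:t} = \gamma_{1:t-1} + \gamma_t \le \gamma_{1:t-1} + \bar\gamma$, hence the denominator for $\wtilde y_{t+1}$ is no larger, so the scalar for $\wtilde{y}_{t+1}$ is at least that for $\bar{y}_t$. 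Because $\nabla\psi$ is non-decreasing (coordinate-wise, since $\psi$ has non-decreasing gradient) and $w \ge 0$, scaling $w$ by a larger scalar gives a coordinate-wise larger gradient; therefore $\bar{y}_t \le \wtilde{y}_{t+1}$.

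For the other direction I would write $\wtilde{y}_{t+1} = \nabla\psi(\lambda \cdot z)$ where $z := \tfrac{w}{4(1+\gamma_{1:t-1}+\bar\gamma)}$ is the argument inside $\bar y_t$, and $\lambda := \tfrac{1+\gamma_{1:t-1}+\bar\gamma}{1+\gamma_{1:t}} \ge 1$. The key estimate is to bound $\lambda$: since $\gamma_{1:t-1} + \bar\gamma - \gamma_{1:t} = \bar\gamma - \gamma_t \le \bar\gamma \le \tfrac{1}{4p} \le 1$ and the denominator $1+\gamma_{1:t} \ge 1$, we get $\lambda \le 1 + \bar\gamma \le 1 + \tfrac{1}{4p}$. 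By the growth-of-order-$p$ hypothesis, $\nabla\psi(\lambda z) \le \lambda^{p-1}\nabla\psi(z)$, and $\lambda^{p-1} \le (1+\tfrac{1}{4p})^{p-1} \le e^{(p-1)/(4p)} \le e^{1/4} \le 2$. Combining, $\wtilde{y}_{t+1} = \nabla\psi(\lambda z) \le 2\,\nabla\psi(z) = 2\,\bar{y}_t$, which is the desired bound.

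The only mild subtlety — and the step I'd be most careful with — is making sure the growth condition is applied correctly: it is stated as $\nabla\psi(\gamma u) \le \gamma^{p-1}\nabla\psi(u)$ for $\gamma \ge 1$, so I need $\lambda \ge 1$, which holds since $\gamma_{1:t-1}+\bar\gamma \ge \gamma_{1:t}$, and I need the constant $\bar\gamma \le \tfrac{1}{4p}$ (guaranteed by the hypothesis of Theorem \ref{thm:OCO}) to make $\lambda^{p-1}$ bounded by $2$ uniformly in $p$. Everything else is just monotonicity of $\nabla\psi$ on $\R^m_+$ and the fact that each $\gamma_t \in \{0,\bar\gamma\}$; no further structural assumptions on $\psi$ (separability, homogeneity) are needed.
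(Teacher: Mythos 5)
Your plan follows the same route as the paper (compare the arguments of $\nabla\psi$ coordinate-wise, then use monotonicity of $\nabla\psi$ for the lower bound and the growth condition for the upper bound), but it contains a concrete error: your closed form for $\wtilde{y}_{t+1}$ is wrong. By \eqref{eq:FTL}, the Follow-the-Leader iterate at time $t+1$ maximizes $\wtilde{L}_0 + \cdots + \wtilde{L}_t$, so $\wtilde{y}_{t+1} = \nabla\psi\big(\tfrac{4p\ones + v_{1:t}}{4(1+\gamma_{1:t})}\big)$; the numerator contains $v_t$ and is therefore \emph{not} the same vector $w = 4p\ones + v_{1:t-1}$ that appears inside $\bar{y}_t$ in \eqref{eq:OCOn}. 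The two gradient arguments differ both in the scalar and in the vector, so the premise ``same vector scaled by two different scalars'' on which your whole argument rests is false.

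The lower bound $\bar{y}_t \le \wtilde{y}_{t+1}$ survives (in fact it only gets easier: the numerator increases and the denominator decreases, and $\nabla\psi$ is monotone), but your upper-bound computation genuinely misses a factor: you account only for the denominator ratio $\tfrac{1+\gamma_{1:t-1}+\bar\gamma}{1+\gamma_{1:t}} \le 1+\bar\gamma$ and omit the numerator ratio $\tfrac{4p + (v_1)_j + \cdots + (v_t)_j}{4p + (v_1)_j + \cdots + (v_{t-1})_j} \le 1 + \tfrac{1}{4p}$, which is exactly where the shift $4p\ones$ is needed (this is the role of $\wtilde{L}_0$). The repair is what the paper does: the coordinate-wise ratio of the two arguments is at most $(1+\tfrac{1}{4p})(1+\bar\gamma) \le (1+\tfrac{1}{4p})^2 \le 2^{1/p}$, and then the growth condition gives $\wtilde{y}_{t+1} \le \nabla\psi(2^{1/p}\,\bar{w}_t) \le 2^{(p-1)/p}\,\nabla\psi(\bar{w}_t) \le 2\,\bar{y}_t$. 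So the statement and your general method are sound, but as written the proof has a gap caused by the misidentified iterate and the missing numerator factor.
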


	\begin{proof}
		Let $\bar{w}_t$ be the argument in the gradient in \eqref{eq:OCOn} so that $\bar{y}_t = \nabla \psi(\bar{w}_t)$, and similarly let $\wtilde{w}_t$ be the argument in the gradient in \eqref{eq:FTL}. The ratio of the $j$th coordinate of the vectors $\wtilde{w}_{t+1}$ and $\bar{w}_t$ is 
		\begin{align}
		\frac{(\wtilde{w}_{t+1})_j}{(\bar{w}_t)_j} ~=~ \bigg[\frac{4p + (v_1)_j + \ldots + (v_t)_j}{4p + (v_1)_j + \ldots + (v_{t-1})_j}\bigg] \cdot \bigg[\frac{1 + \gamma_{1:t-1} + \bar{\gamma}}{1 + \gamma_{1:t}} \bigg].  \label{eq:bra}
		\end{align}
	Since $v_t \in [0,1]^m$, the ratio in the first bracket is at least 1; for the same reason, it is also at most $(\sqrt{2})^{1/p}$:
		\begin{align*}
		\frac{4p + (v_1)_j + \ldots + (v_t)_j}{4p + (v_1)_j + \ldots + (v_{t-1})_j} ~=~ 1 + \frac{(v_t)_j}{4p + (v_1)_j + \ldots + (v_{t-1})_j} ~\le~ 1 + \frac{1}{4p} ~\le~ (\sqrt{2})^{1/p},
		\end{align*}
		where the last inequality uses the bound $(1+\frac{x}{p})^p \le e^{x}$ valid for all $x$. Similarly, the second bracket of \eqref{eq:bra} is also are least 1 (since $\gamma_t \le \bar{\gamma}$) and at most $(\sqrt{2})^{1/p}$:
		\begin{align*}
		\frac{1 + \gamma_{1:t-1} + \bar{\gamma}}{1 + \gamma_{1:t}} \le 1 + \bar{\gamma} \le (\sqrt{2})^{1/p},
		\end{align*}		
		where the last inequality uses the assumption that $\bar{\gamma}$ is at most $\frac{1}{4p}$.
		
		Together these give that $\bar{w}_t \le \wtilde{w}_{t+1} \le 2^{1/p} \bar{w}_t$. Then our assumption that the gradient $\nabla \psi$ is monotone guarantees that $$\bar{y}_t = \nabla \psi(\bar{w}_t) \le \nabla \psi(\wtilde{w}_{t+1}) = \wtilde{y}_{t+1},$$ and additionally using the growth condition we get $$\wtilde{y}_{t+1} \le \nabla \psi(2^{1/p} \bar{w}_t) \le 2 \nabla \psi(\bar{w}_t) = 2 \bar{y}_t.$$ This concludes the proof. 
	\end{proof}
	
	Using these two lemmas we can finalize the proof of Item 1 of Theorem \ref{thm:OCO}. Since $\psi^*$ is non-decreasing, the previous lemma gives
	\begin{align*}
		L_{\gamma_t}(\bar{y}_t, \tfrac{1}{2} v_t) ~=~ \frac{1}{2} \ip{\bar{y}_t}{v_t} - \gamma_t\,\psi^*(\bar{y}_t) ~\ge~ \frac{1}{4} \ip{\wtilde{y}_{t+1}}{v_t} - \gamma_t \psi^*(\wtilde{y}_{t+1}) ~=~ \frac{1}{4} \wtilde{L}_t(\wtilde{y}_{t+1}).
	\end{align*}
	Adding over all times $t' \le t$ using Lemma \ref{lemma:BTL} we get
	\begin{align}
		\sum_{t' = 1}^{t} L_{\gamma_t}(\bar{y}_{t'}, \tfrac{1}{2} v_{t'}) ~\ge~ \frac{1}{4} \sum_{t' = 1}^{t} \wtilde{L}_{t'}(\wtilde{y}_{t'+1}) &~\ge~ \frac{1}{4} \max_y \sum_{t' = 0}^{t} \wtilde{L}_{t'}(y) ~-~ \frac{1}{4} \wtilde{L}_0(\wtilde{y}_1). \label{eq:regretN1}
	\end{align}
	Expanding the first term in the right-hand side we see
	\begin{align*}
		\max_y \sum_{t' = 0}^{t} \wtilde{L}_{t'}(y) ~&=~ \max_y \bigg(\ip{y}{4p\ones + v_1 + \ldots + v_t} -  4\bigg(1 + \gamma_{1:t} \bigg) \psi^*(y)\bigg)\\
		&=~ 4\bigg(1 + \gamma_{1:t}\bigg) \cdot \max_y \bigg(\ipbig{y}{\frac{4p\ones + v_1 + \ldots + v_t}{4(1 + \gamma_{1:t})}} -  \psi^*(y)\bigg)\\
		&=~ 4\bigg(1 + \gamma_{1:t}\bigg) \cdot \psi\bigg(\frac{4p\ones + v_1 + \ldots + v_t}{4(1 + \gamma_{1:t})}\bigg).
	\end{align*}	
	Similarly, the optimality of $\wtilde{y}_1$ gives
	\begin{align*}
	\wtilde{L}_0(\wtilde{y}_1) = \sup_y \wtilde{L}_0(y) = \sup_y \bigg(\ip{y}{4p\ones} - 4 \psi^*(y) \bigg) = 4 \psi(p \ones).
	\end{align*}	
	Plugging these bounds on \eqref{eq:regretN1} gives that for all $t$
	\begin{align}
	\sum_{t' = 1}^{t} L_{\gamma_t}(\bar{y}_{t'}, \tfrac{1}{2} v_{t'}) ~\ge~ \psi\bigg(\frac{4p\ones + v_1 + \ldots + v_t}{4(1 + \gamma_{1:t})}\bigg) ~-~  \psi(p \ones). \label{eq:regretN2}
	\end{align}	
	Taking $t=n$ and recalling $\sum_t \gamma_t = 1$ proves Item 1 of Theorem \ref{thm:OCO}.
	
%#############################################################

	\paragraph{Item 2: Size control of the iterates.} Using the expression for $\bar{y}_t$ given by equation \eqref{eq:OCOn} we have
	\begin{align*}
		\psi^*(\bar{y}_t) ~=~ \psi^*\bigg(\nabla \psi\bigg(\frac{4 p\ones + v_1 + \ldots v_t}{4 (1 + \gamma_{1:t-1} + \bar{\gamma})} \bigg)  \bigg) \le \psi^*\bigg(\nabla \psi\bigg(\frac{4 p\ones + v_1 + \ldots v_t}{4 (1 + \blue{\gamma_{1:t}})} \bigg)  \bigg) ~\le~ p \cdot \psi\bigg(\frac{4p\ones + v_1 + \ldots v_t}{4 (1 + \gamma_{1:t})} \bigg),
	\end{align*}
	where the first inequality uses $\gamma_t \le \bar{\gamma}$ and the monotonicity of both $\psi^*$ and $\nabla \psi$, and the last inequality uses Lemma~\ref{lemma:growth}.(c). Plugging this into inequality \eqref{eq:regretN2} and noticing that $L_{\gamma_t}(\bar{y}_t, \tfrac{1}{2}v_t) \le \ip{\bar{y}_t}{v_t}$, we get
	\begin{align}
		\frac{1}{p} \psi^*(\bar{y}_t) ~\le~ \sum_{t' = 1}^{t} \ip{\bar{y}_{t'}}{v_{t'}} ~+~ \psi(p\ones) ~\le~   \sum_t \ip{\bar{y}_t}{v_t} ~+~ \psi(p\ones),  \label{eq:regretN3}
	\end{align}
	proving the desired result. 

%###########################################################

	\paragraph{Item 3: Improved size control in separable case.} Now $\psi : \R^m_+ \rightarrow \R_+$ is a separable function, so we can write it as $\psi(x) = \sum_i \psi_i(x_i)$ for 1-dimensional functions $\psi_i : \R_+ \rightarrow \R_+$. It follows directly from the definition of Fenchel conjugate that $\psi^*$ is also separable: $\psi^*(y) = \sum_i \psi^*_i(y_i)$, where $\psi_i^*$ is the Fenchel dual of $\psi_i$.  
	
	The main point is that in this separable case the algorithm \OCOalg acts \emph{independently} on each of the coordinates, namely the $i$th coordinate of the iterate $\bar{y}_t$ only depends on the $i$th coordinate of the $v_t$'s; for example, from \eqref{eq:OCOn} we see that
	\begin{align*}
		(\bar{y}_t)_i ~=~ \psi'_i\bigg(\frac{4p + (v_1)_i + \ldots + (v_{t-1})_i}{4(1 + \gamma_{1:t-1} + \bar{\gamma})} \bigg).
	\end{align*}
	Thus, we can equivalently think that we are running copies of \OCOalg on $m$ 1-dimensional problems, one for each coordinate. Moreover, since each $\psi_i$ is a coordinate restriction of $\psi$, it is easy to see that $\psi_i$ still satisfies all the properties needed for the analysis of Items 1 and 2 above (e.g., has non-decreasing derivatives, has the bounded growth condition $\psi'_i(\alpha x_i) \le \alpha^p \psi'_i(x_i)$, etc.). In particular, inequality \eqref{eq:regretN3} holds for $\psi_i$, giving
	\begin{align*}
		\sum_t (\bar{y}_t)_i\cdot(v_t)_i ~+~ \psi_i(p) ~\ge~ \frac{1}{p} \max_t \psi^*_i((\bar{y}_t)_i) ~=~ \frac{1}{p} \psi^*_i\big(\max_t (\bar{y}_t)_i\big),
	\end{align*}
	where the last equation uses the fact that $\psi^*$ is non-decreasing (Lemma \ref{lemma:basicFenchel}.(a)). Since the $i$th coordinate of $\bigvee_t \bar{y}_t$ is precisely $\max_t (\bar{y}_t)_i$, adding the displayed inequality over all coordinates $i$ gives
	\begin{align*}
		\sum_t \ip{\bar{y}_t}{v_t} ~+~ \psi(p \ones) ~\ge~ \frac{1}{p} \sum_i \psi^*_i\bigg(\bigg(\bigvee_t \bar{y}_t\bigg)_i\bigg) ~=~ \frac{1}{p} \psi^*\bigg(\bigvee_t \bar{y}_t\bigg),
	\end{align*}	
	yielding the desired result. 	
	
%############################################################

	\paragraph{Item 4: Almost monotone.} For $i=1,\ldots,p$, let $t_i$ be a time so that $\gamma_{1:t_i} \in [2^{i/p} - 1,~ 2^{i/p} - 1 + \bar{\gamma}]$. Then set $\mathcal{T}$ to be the set of these times $t_i$'s. 
	
	To show that for every time $t$, there is a $t_i$ in $\mathcal{T}$ such that $\bar{y}_t \le e  \bar{y}_{t_i}$, we actually show something stronger: take a time $t$ in the interval $(t_{i-1}, t_i]$; we show that $\bar{y}_t \le e  \bar{y}_{t_i}$. For that, let again $\bar{w}_t$ be the argument in the gradient in \eqref{eq:OCOn} so that $\bar{y}_t = \nabla \psi(\bar{w}_t)$. The ratio of the $j$th coordinate of the vectors $\bar{w}_t$ and $\bar{w}_{t_i}$ is
		\begin{align}
		\frac{(\bar{w}_t)_j}{(\bar{w}_{t_i})_j} ~=~ \bigg[\frac{4p + (v_1)_j + \ldots + (v_t)_j}{4p + (v_1)_j + \ldots + (v_{t_i})_j}\bigg] \cdot \bigg[\frac{1 + \gamma_{1:t_i-1} + \bar{\gamma}}{1 + \gamma_{1:t-1} + \bar{\gamma}} \bigg].
		\end{align}
		Since the $v_t$'s are non-negative and $t \le t_i$, the first bracket in the right-hand side is at most 1. In addition, since $t > t_{i-1}$, the second bracket is at most 
		\begin{align*}
		\frac{1 + \gamma_{1:t_i-1} + \bar{\gamma}}{1 + \gamma_{1:t_{i-1}-1} + \bar{\gamma}} \le \frac{1 + \gamma_{1:t_i-1}}{1 + \gamma_{1:t_{i-1}-1}} \le  \frac{2^{i/p} + \bar{\gamma}}{2^{(i-1)/p}} \le 2^{1/p} (1 +  \bar{\gamma}) \le e^{1/p},
		\end{align*}
		where the last inequality uses $(1+x) \le e^x$ and that $\bar{\gamma} \le \frac{1}{4p}$. Therefore, we have that $\bar{w}_t \le e^{1/p} \bar{w}_{t_i}$, and using the growth condition of $\nabla \psi$ we have $$\bar{y}_t = \nabla \psi(\bar{w}_i) \le \nabla \psi(e^{1/p} \bar{w}_{t_i}) \le e\,\nabla \psi(\bar{w}_{t_i}) = e \bar{y}_{t_i}.$$ This concludes the proof of Theorem \ref{thm:OCO}. 
			
%############################################################
%############################################################
%############################################################
%############################################################

	\section{Welfare Maximization with Convex Costs}

	We now consider the problem \welf, which we briefly recap: in each time step a reward $c_t \in \R$ and resource consumption vector $a_t \in [0,1]^m$ arrive, and the algorithm needs to choose $x_t \in [0,1]$ indicating the level it wants to fulfill of this request. The goal is to select the $x_t$'s in an online way to maximize the profit $$\sum_t c_t x_t - \psi\bigg(\sum_t a_t x_t \bigg).$$ In the \mixed model, in a stochastic time $t$ the information $(c_t,a_t)$ is sampled independently from an unknown distribution $\D$ (time invariant). 
	
	 To define $\OPT_{Stoch}$ in this setting, we define the optimal strategy for the stochastic part as before: let $x^*$ be the function that maps each vector $(c,v) \in \R \times [0,1]^m$ to a value $x^*(r,v) \in [0,1]$ and that maximizes $\E \sum_{t \in Stoch} c_t\, x^*(c_t,a_t) - \E \psi( \sum_{t \in Stoch} a_t\, x^*(c_t,a_t))$, and define $\OPT_{Stoch}$ as this optimum. To simplify the notation, for a stochastic time $t$, let $x^*_t := x^*(c_t,a_t)$ be the (random) optimal choice, so $\OPT_{Stoch} = \E \sum_{t \in Stoch} c_t x^*_t - \E \psi( \sum_{t \in Stoch} a_t x^*_t)$.

	We claim that in order to prove Theorem \ref{thm:welfare} it suffices to design an algorithm for the \mixed model that obtains value comparable to the $\OPT$ of only the stochastic part:
	
	\begin{thm} \label{thm:welfareStoch}
		Under the hypothesis of Theorem \ref{thm:welfare}, on the \mixed model the solution $\tilde{x}$ returned by Algorithm~\ref{alg:welfare} has expected profit at least $$\Omega(\tfrac{|Stoch|}{n})\cdot \OPT_{Stoch} - O(\psi(p \ones)).$$
	\end{thm}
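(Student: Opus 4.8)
The plan is to run a primal--dual algorithm in the same spirit as Algorithm~\ref{alg:cvxProg}, now driven by the Fenchel identity $\sum_t c_t x_t - \psi\big(\sum_t a_t x_t\big) = \inf_{y \in \R^m_+} \sum_t \big(c_t x_t - \langle y, a_t\rangle x_t + \tfrac1n \psi^*(y)\big)$ (valid since $\sum_t \tfrac1n = 1$). Algorithm~\ref{alg:welfare} maintains a dual sequence $\bar y_t \in \R^m_+$ produced by the OCO subroutine \OCOalgm, fed with the Lagrangian-type functions built from the data seen so far, and at time $t$ sets the primal $x_t \in [0,1]$ to be a best response for the ``fake profit'' $c_t x - \langle \bar y_t, a_t\rangle x$; the returned solution $\tilde x$ is $x$ up to a universal multiplicative rescaling chosen so that the reward and dual-cost terms match the normalization of \OCOalgm's guarantee (this rescaling only affects the hidden constant in $\Omega(\cdot)$). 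Since $\sup_y\big(\langle y, \sum_t a_t x_t\rangle - \psi^*(y)\big) = \psi\big(\sum_t a_t x_t\big)$, the regret guarantee of \OCOalgm (in the form of Item~1 of Theorem~\ref{thm:OCO}, adapted) yields an upper bound $\psi\big(\sum_t a_t x_t\big) \le \sum_t \langle \bar y_t, a_t\rangle x_t - \tfrac1n \sum_t \psi^*(\bar y_t) + O(\psi(p\ones))$, so that the profit of $\tilde x$ is at least $\sum_t\big(c_t x_t - \langle \bar y_t, a_t\rangle x_t\big) + \tfrac1n \sum_t \psi^*(\bar y_t) - O(\psi(p\ones))$: the real profit dominates the total fake profit plus a $\psi^*$-slack term, minus the additive error.

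Next I would lower-bound the total fake profit by a \emph{scaled-down} copy of the stochastic optimum. Because $x_t$ is a best response and $x=0$ is feasible, every per-step fake profit is nonnegative, so it is enough to keep the stochastic steps: $\sum_t(c_t x_t - \langle \bar y_t, a_t\rangle x_t) \ge \sum_{t \in Stoch}(c_t x_t - \langle \bar y_t, a_t\rangle x_t)$. Introduce the benchmark $\hat x_t := \tfrac{|Stoch|}{n}\, x^*_t$ for $t \in Stoch$; since $\hat x_t \in \big[0, \tfrac{|Stoch|}{n}\big] \subseteq [0,1]$, the best-response property gives $c_t x_t - \langle \bar y_t, a_t\rangle x_t \ge c_t \hat x_t - \langle \bar y_t, a_t\rangle \hat x_t$ for each stochastic $t$. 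Now the i.i.d.\ structure lets me linearize per time step exactly as in Lemma~\ref{lemma:stoch}: conditioning on the history $\F_{t-1}$ fixes $\bar y_t$ and leaves $(c_t,a_t)$ a fresh sample, so, writing $R^* := \sum_{t\in Stoch} c_t x^*_t$ and $A^* := \sum_{t\in Stoch} a_t x^*_t$, we get $\E[c_t \hat x_t \mid \F_{t-1}] = \tfrac1n\, \E R^*$ and $\E[a_t \hat x_t \mid \F_{t-1}] = \tfrac1n\, \E A^*$. Summing over the $|Stoch|$ stochastic times and taking expectations gives $\E \sum_{t\in Stoch}\big(c_t \hat x_t - \langle \bar y_t, a_t\rangle \hat x_t\big) = \tfrac{|Stoch|}{n}\, \E R^* - \tfrac1n \sum_{t\in Stoch}\big\langle \E \bar y_t,\ \E A^*\big\rangle$.

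It then remains to dispose of the dual coupling term. By the Fenchel inequality \eqref{eq:fenchelIneq} together with convexity of $\psi^*$ (Jensen), $\big\langle \E \bar y_t, \E A^*\big\rangle \le \psi(\E A^*) + \psi^*(\E \bar y_t) \le \psi(\E A^*) + \E \psi^*(\bar y_t)$, hence $\tfrac1n \sum_{t\in Stoch}\big\langle \E \bar y_t, \E A^*\big\rangle \le \tfrac{|Stoch|}{n}\,\psi(\E A^*) + \tfrac1n \sum_{t \in Stoch}\E\psi^*(\bar y_t)$, and this last sum is cancelled by the $\tfrac1n \sum_t \psi^*(\bar y_t)$ slack carried from the regret step (using $\psi^*\ge 0$ on the adversarial indices). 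Chaining everything we obtain $\E[\text{profit of }\tilde x] \ge \tfrac{|Stoch|}{n}\E R^* - \tfrac{|Stoch|}{n}\psi(\E A^*) - O(\psi(p\ones))$, and one final use of Jensen, $\psi(\E A^*) \le \E\psi(A^*)$, gives $\tfrac{|Stoch|}{n}\E R^* - \tfrac{|Stoch|}{n}\psi(\E A^*) \ge \tfrac{|Stoch|}{n}\big(\E R^* - \E\psi(A^*)\big) = \tfrac{|Stoch|}{n}\,\OPT_{Stoch}$, which is even slightly stronger than the claimed $\Omega(\tfrac{|Stoch|}{n})\cdot\OPT_{Stoch} - O(\psi(p\ones))$ (the $\Omega$ in the statement absorbs whatever constant survives the rescaling of the previous paragraph). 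Note that nothing here uses a competitive bound against $\OPT_{Adv}$ --- the adversarial times enter only through ``fake profit $\ge 0$'', which is precisely what this intermediate statement isolates.

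The step I expect to be the main obstacle is bookkeeping the normalizations. The OCO guarantee is calibrated with $\gamma_t = \tfrac1n$ spread over all $n$ steps, whereas only the $|Stoch|$ stochastic steps carry the benchmark, and one must check that the $\psi^*$-slack produced by the regret inequality is \emph{exactly} enough to absorb the $\sum_{t\in Stoch}\E\psi^*(\bar y_t)$ term generated by the Fenchel coupling. Relatedly, the multiplicative scaling baked into \OCOalgm (the analogue of the $\tfrac18$-type constants appearing in \OCOalg) must be applied consistently to the reward term $\sum_t c_t x_t$ and the dual-cost term $\sum_t\langle \bar y_t, a_t\rangle x_t$ --- this is the reason the algorithm outputs a rescaled $\tilde x$ rather than $x$ itself, and a mismatch there would degrade the $\OPT_{Stoch}$-factor by more than a constant. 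A minor verification along the way is that the benchmark $\hat x_t$ remains in $[0,1]$ (equivalently, in the range where the best-response comparison is legitimate) once those scaling constants are taken into account, which holds since $|Stoch| \le n$.
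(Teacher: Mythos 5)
Your overall architecture matches the paper's: compare against the scaled-down benchmark $\hat x_t = \tfrac{|Stoch|}{n}\,x^*_t$ on the stochastic times (using the conditional-expectation linearization of Lemma \ref{lemma:stoch}), use best response against $x=0$ on the adversarial times, and use the regret guarantee of the OCO subroutine to pass from fake to real cost. Your bookkeeping of the Fenchel step (keeping the $\tfrac1n\sum_t\psi^*(\bar y_t)$ slack from the regret inequality and cancelling it against $\psi^*(\E\bar y_t)$ via Fenchel's inequality and Jensen) is a harmless variant of the paper's route, which instead feeds the scaled vectors $v^*_t/\beta$ directly into Lemma \ref{lemma:stoch}. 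Up to that point the argument is sound.

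The genuine gap is in the step you yourself flag as ``bookkeeping the normalizations.'' The regret bound you invoke, $\psi\big(\sum_t a_t x_t\big) \le \sum_t \langle \bar y_t, a_t\rangle x_t - \tfrac1n\sum_t\psi^*(\bar y_t) + O(\psi(p\ones))$, is not what Theorem \ref{thm:OCO} provides: Item~1 only gives $\psi\big(\tfrac18\sum_t \bar v_t\big) \le \tfrac12\sum_t\langle\bar y_t,\bar v_t\rangle - \tfrac1n\sum_t\psi^*(\bar y_t) + \psi(p\ones)$, with a $\tfrac18$ inside $\psi$ and a $\tfrac12$ on the linear term. Because of this, scaling the play down by a constant is \emph{not} automatically ``only a constant in the $\Omega$'': if $\tilde x_t = \delta\bar x_t$, the reward shrinks by $\delta$ while the cost term you can control is $\psi(\tfrac18\sum\bar v_t)$, and by convexity alone $\psi(\delta\sum a_t\bar x_t)$ is only bounded by $8\delta\cdot\psi(\tfrac18\sum\bar v_t)$; since the fake profit can be far smaller than either of its two constituent sums, this factor-$8$ mismatch between the reward and cost coefficients destroys the chain. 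This is exactly where the hypothesis of Theorem \ref{thm:welfare} that $\psi = \psi_{lin}+\psi_{high}$ with $\psi_{high}(\gamma u)\ge\gamma^2\psi_{high}(u)$ must enter: one first reduces WLOG to the case where $\psi$ itself grows at least quadratically (by moving $\psi_{lin}$ into the rewards $c_t$), and then the quadratic growth gives $\psi\big(\tfrac{1}{8^2}\sum_t a_t\bar x_t\big) \le \tfrac{1}{8^2}\,\psi\big(\tfrac18\sum_t \bar v_t\big)$, so that playing $\tilde x_t=\bar x_t/8^2$ makes the real profit at least $\tfrac{1}{8^2}$ times the fake profit minus $\tfrac{1}{8^2}\psi(p\ones)$. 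Your proposal never invokes this assumption (nor the WLOG reduction), so as written the normalization step does not close, and it is not a detail that generic convexity can repair.
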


	This suffices because of the following: running the adversarial $\Omega(\frac{1}{p})$-approximation of~\cite{cvxPDFOCS} over the whole instance we get at least a $\Omega(\frac{1}{p})$ fraction of the profits of the adversarial part or of the stochastic part of the instance, whichever is largest. Thus, randomly choosing between running this algorithm or Algorithm~\ref{alg:welfare} gives expected profit at least $$\Omega(\max\{\tfrac{|Stoch|}{n}, \tfrac{1}{p}\})\cdot \OPT_{Stoch} + \Omega(\tfrac{1}{p})\cdot \OPT_{Adv} - O(\psi(p\ones)),$$ as desired. \mnote{While the statement in~\cite{cvxPDFOCS} assumes that all $c_t$'s equal 1, this is without loss of generality by scaling, see the last page of \cite{cvxPDFOCS}} \mnote{Would be nice to put details in the appendix} 
	
	So we focus on proving Theorem \ref{thm:welfareStoch}. The idea of the algorithm is similar to that of the previous section: we see \welf as the minimax problem 
	\begin{align}
		\max_{(x_t) \in [0,1]^n} \bigg[\sum_t c_t x_t - \psi\bigg(\sum_t a_t x_t\bigg) \bigg] &= \max_{(x_t) \in [0,1]^n} \bigg[\sum_t c_t x_t - \sup_y \bigg(\ip{y}{\textstyle \sum_t a_t x_t} - \psi^*(y)\bigg) \bigg] \label{eq:minimax2}\\
		&= \inf_{y} \max_{(x_t) \in [0,1]^n} \bigg[\sum_t c_t x_t - \bigg(\ip{y}{\textstyle \sum_t a_t x_t} - \psi^*(y)\bigg) \bigg] \label{eq:minimax3}.
	\end{align}
	So we will replace the real cost based on $\psi$ by linearized fake costs given by $L(y,v) = \ip{y}{v} - \frac{1}{n}\, \psi^*(y)$, as before. Again we will compute approximations of the ``right slope'' $y$ online (based on the $\sup$ in \eqref{eq:minimax2}) and optimize $x_t$ based on the previous approximation $\bar{y}_{t-1}$ (based on the $\max$ in \eqref{eq:minimax3}). Let $\bar{v}_t := a_t \bar{x}_t$ be the (virtual) resource consumption incurred at time $t$. 
	
	\begin{algorithm}[H]
  \caption{Algorithm \algw for Welfare Maximization with Convex Costs}
  \begin{algorithmic}[1]
  	\For{each time $t$}
		\State (Dual) Compute $\bar{y}_t \in \R^m_+$ by feeding  $\bar{v}_1,\ldots,\bar{v}_{t-1}$ to the OCO algorithm \OCOalg
		\State 		
\parbox[t]{\dimexpr\textwidth-\leftmargin-\labelsep-\labelwidth}{(Primal) Compute the \textbf{virtual} play $\bar{x}_t$ as best response with fake cost: $$\bar{x}_t = \argmax_{x_t \in [0,1]} (c_t x_t - L(\bar{y}_t,\, a_t x_t)).$$ But \textbf{play} the scaled value $\tilde{x}_t = \frac{1}{8^2}\,\bar{x}_t$\strut}
		\EndFor
  \end{algorithmic}
  \label{alg:welfare}
	\end{algorithm}	
	
	While the details are a bit different, the high-level of its analysis is the same in the previous section. By analogy with \eqref{eq:main} we will argue that roughly
	\begin{align}
		\underbrace{\sum_t c_t \tilde{x}_t - \psi\bigg(\sum_t a_t \tilde{x}_t\bigg)}_{\textrm{\algw's profit}} \stackrel{\textrm{regret}}{\gtrsim} \underbrace{\sum_t c_t \bar{x}_t - \sum_t L(\bar{y}_t, \bar{v}_t)}_{\textrm{virtual \algw's fake profit}} \stackrel{\textrm{best resp}}{\ge} \underbrace{\sum_{t \in Stoch} c_t x^*_t - \sum_{t \in Stoch} L(\bar{y}_t, v^*_t)}_{\textrm{\OPT's fake profit}} \stackrel{\star}{\gtrsim} \OPT_{Stoch}. \label{eq:main2}
	\end{align}
		
%########################################################
%########################################################

	\subsection{Guarantee of the algorithm and Proof of Theorem \ref{thm:welfareStoch}}
	
	Recall we are assuming that $\psi = \psi_{lin} + \psi_{high}$, where $\psi_{lin}$ is a linear function and $\psi_{high}$ grows at least quadratically: $\psi_{high}(\gamma u) \ge \gamma^2\, \psi_{high}(u)$ for $\gamma \ge 1$. WLOG, assume that $\psi$ itself grows at least quadratically, otherwise run the algorithm on the instance with rewards $c - \psi_{lin}$ and cost function $\psi_{high}$.  To simplify the notation, for a stochastic time $t$ let $v^*_t := a_t x^*_t$ be $\OPT$'s resource consumption, and let $\vOPTr := \sum_{t \in Stoch} v^*_t$ be its total resource consumption. 
%Furthermore, assume the rewards $c_t$ are non-negative: because the occupations $a_t$ and $\bar{y}_t$ are non-negative and $\psi$ increasing, both our algorithm and $\OPT$ set $x_t = 0$ when $c_t$ is negative, and the effect is the same as having $c_t = 0$.\mnote{Remove and put footnote that we don't assume non-neg, if don't need}
	
	We start proving inequality \eqref{eq:main2} from right to left. Lemma \ref{lemma:stoch} applied over the scaled input $\frac{v^*_t}{\beta}$ gives (again $\beta = \frac{n}{|Stoch|}$) 
	\begin{align*}
	\E \sum_{t \in Stoch} L\bigg(\bar{y}_t, \frac{v^*_t}{\beta}\bigg) \le \frac{1}{\beta}\, \E\,\psi\Big(\vOPT^{\,load}_{Stoch}\Big).
	\end{align*}
	Recalling that $\OPT$'s profit over the stochastic part of the instance is $\OPT_{Stoch} = \E \sum_{t \in Stoch} c_t x^*_t - \E \psi(\vOPT^{\,load}_{Stoch})$, we get
	\begin{align}
	\E\bigg[\sum_{t \in Stoch} c_t\, \frac{v^*_t}{\beta} ~- \sum_{t \in Stoch} L\bigg(\bar{y}_t, \frac{v^*_t}{\beta}\bigg)\bigg] \ge \frac{1}{\beta} \OPT_{Stoch}, \label{eq:w1}
	\end{align}
	obtaining a relationship between the fake and real profit of (scaled versions of) $\OPT$. 
			
	For the next step of inequality \eqref{eq:main2}, since $\bar{x}_t$ is a best response for which $\frac{x^*_t}{\beta}$ was a candidate, we can compare virtual \algw's and the scaled $\OPT$'s fake profits for, say, a stochastic time $t$:
	\begin{align*}
	c_t \bar{x}_t ~-  L(\bar{y}_t, \bar{v}_t) ~\ge~ c_t\, \frac{v^*_t}{\beta} - L\bigg(\bar{y}_t, \frac{v^*_t}{\beta}\bigg). 
	\end{align*}	
	We can also compare best response with the candidate $x_t = 0$ to obtain for, say, an adversarial time $t$:
	\begin{align*}
	c_t \bar{x}_t ~-  L(\bar{y}_t, \bar{v}_t) ~\ge~ - L(\bar{y}_t, 0) = \frac{1}{n}\,\psi^*(\bar{y}_t) \ge 0. 
	\end{align*}	
	Adding these bounds over the stochastic and adversarial times, respectively, we obtain a comparison between virtual \algw's fake profit and the scaled $\OPT$'s fake profits:
	\begin{align}
	\sum_t c_t \bar{x}_t ~- \sum_t L(\bar{y}_t, \bar{v}_t) ~\ge~ \sum_{t \in Stoch} c_t\, \frac{v^*_t}{\beta} ~- \sum_{t \in Stoch} L\bigg(\bar{y}_t, \frac{v^*_t}{\beta}\bigg).  \label{eq:w2}
	\end{align}	
	Finally, for the first inequality in \eqref{eq:main2}, using the definition of the scaled play $\tilde{x}_t = \frac{\bar{x}_t}{8^2}$ and the assumption that $\psi$ grows at least quadratically
	\begin{align*}
		\textrm{\algw's profit} = \sum_t c_t \tilde{x}_t - \psi\bigg(\sum_t a_t \tilde{x}_t\bigg) &= \frac{1}{8^2}\,\sum_t c_t \bar{x}_t - \psi\bigg(\frac{1}{8^2}\sum_t a_t \bar{x}_t\bigg)\\
		&\ge \frac{1}{8^2} \bigg[\sum_t c_t \bar{x}_t - \psi\bigg(\frac{1}{8} \sum_t \bar{v}_t\bigg) \bigg]
	\end{align*}
	(recall $\bar{v}_t = a_t\bar{x}_t$). The regret guarantee of algorithm \OCOalg (Item 1 of Theorem \ref{thm:OCO}) gives that the $\psi$ term on the right-hand side is at most $\sum_t L(\bar{y}_t,\bar{v}_t) + \psi(p \ones)$. Applying this to the displayed inequality gives  
	\begin{align}
		\textrm{\algw's profit} ~\ge~ \frac{1}{8^2} \bigg[\sum_t c_t \bar{x}_t - \sum_t L(\bar{y}_t, \bar{v}_t)\bigg] - \frac{1}{8^2}\, \psi(p \ones). \label{eq:w3}
	\end{align}
	
	Chaining inequalities \eqref{eq:w1}-\eqref{eq:w3} and taking expectation, we obtain 
	\begin{align*}
		\E\, \textrm{\algw's profit} \ge \frac{1}{8^2\, \beta} \,\OPT_{Stoch} - \frac{1}{8^2}\,\psi(p \ones).
	\end{align*}	
	 This concludes the analysis of the algorithm and the proof of Theorem \ref{thm:welfareStoch}.

%############################################################
%############################################################
%############################################################
%############################################################

\bibliographystyle{plain}
\bibliography{online-lp-short}

%############################################################
%############################################################
%############################################################
%############################################################

	\pagebreak
	\appendix	
	\noindent {\LARGE \bf Appendix}

	\section{$\ell_p$-norm Load Balancing} \label{app:loadBal}
	
	First, we remark that it suffices to consider $p \le \log m$, since $\|\cdot\|_{p} = \Theta(1) \cdot \|\cdot\|_{\infty}$ for all $p \ge \log m$.
	
	To obtain the guarantee from Theorem \ref{thm:loadBal}, apply the algorithm from Theorem \ref{thm:main} with the function $\psi(u) = \|u\|_p^p = \sum_i u_i^p$. Since this $\psi$ is separable and homogeneous, this yields a solution $(\bar{v}_t)_t$ that satisfies 
	\begin{align*}
		\E \bigg\|\sum_t \bar{v}_t\bigg\|_p^p ~\le~ O(1)^p\,  \bigg\|\E \sum_{t \in Stoch} v^*_t\bigg\|_p^p ~+~ O(p)^p\,\bigg\|\sum_{t \in Adv} v^*_t\bigg\|_p^p + O(1)^p\,\|p\ones\|_p^p.
	\end{align*}
Notice that the expectation in \OPT's cost in the stochastic part is inside the function $\psi = \|\cdot\|_p^p$. While this is not how the Theorem \ref{thm:main} is stated, this stronger guarantee is given in \eqref{eq:wrapHomo}.%, in the homogeneous case the result indeed holds with the expectation inside, see inequality \eqref{eq:stoch}. 
	
	Taking $p$th root on both sides and using the subadditivity $(a + b)^{1/p} \le a^{1/p} + b^{1/p}$ that holds for all $a,b\ge 0$, we get 
	\begin{align*}
		\bigg(\E \bigg\|\sum_t \bar{v}_t\bigg\|_p^p\bigg)^{1/p} ~\le~ O(1)\,  \bigg\|\E \sum_{t \in Stoch} v^*_t\bigg\|_p ~+~ O(p)\,\bigg\|\sum_{t \in Adv} v^*_t\bigg\|_p + O(1)\,\|p\ones\|_p.
	\end{align*}
	Since $(\cdot)^{1/p}$ is concave, from Jensen's inequality we see that $\E \|\sum_t \bar{v}_t\|_p$ is at most the left-hand side of this expression. This gives the desired result.  

%############################################################
%############################################################

	\section{Proof of Lemma \ref{lemma:growth}} \label{app:fenchel}
	
	We just prove that $\ip{\nabla \psi(u)}{u} \le p \cdot \psi(u)$ for all non-negative vector $u$, and the result will follow from Lemma 4 of \cite{cvxPDFOCS}. Let $\nabla_i \psi$ denote the $i$th coordinate of $\nabla \psi$. 
		
		We first claim that $\int_0^1 \nabla_i \psi(t u)\,\d t \ge \frac{1}{p} \nabla_i \psi(u)$: For $t \in (0,1]$, using the growth assumption on $\psi$ with $\alpha = \frac{1}{t}$ we get $\nabla_i \psi(t u) \ge t^{p-1} \nabla_i \psi(u)$, and integrating over $t$ on both sides gives the claim. Thus, we have coordinate-wise $\nabla \psi(u) \le p \cdot \int_0^1 \nabla \psi(t u)\,\d t$. Using the non-negativity of $u$ and $\psi(0) = 0$, 
		\begin{align*}
			\ip{\nabla \psi(u)}{u} \le \ip{{p \cdot \textstyle \int_0^1 \nabla \psi(t u)\,\d t}\,}{u} = p \int_0^1 \ip{\nabla \psi(tu)}{u}\,\d t = p \cdot \psi(u).
		\end{align*}
		This concludes the proof. 

%############################################################
%############################################################

	\section{Proof of Theorem \ref{thm:main}: Homogeneous Case} \label{app:homo}
	
	Recall that $\psi$ is positively homogeneous of degree $p$ if for every vector $u$ and $\gamma \ge 0$ we have the equality $\psi(\gamma u) = \gamma^p \psi(u)$. We first recall a couple of facts about homogeneous convex functions that can be found, for example, in~\cite{lasserre}. 
	
	\begin{lemma}
		If $\psi : \R^m_+ \rightarrow \R_+$ is a convex function that is homogeneous of degree $p$, then its gradient $\nabla \psi$ is homogeneous of degree $p-1$.
%			\item The conjugate $\psi^*$ is homogeneous of degree $\frac{p}{p-1}$.  
%		\end{itemize}
	\end{lemma}

	As described in Section \ref{sec:homo}, consider the Lagrangian $L_{\gamma_t}(\cdot, \bar{v}_t)$ where $\gamma_t = \frac{1}{|Stoch|}$ if $t$ is a stochastic time, and $\gamma_t = 0$ otherwise. Let $(\check{y}_t)_t$ be the sequence returned by \OCOalg when run over these Lagrangian functions. Using the guarantees of Theorem \ref{thm:OCO}, we obtain a bound analogous to Corollary \ref{cor:cor} to these Lagrangians.
	
	\begin{lemma}
		We have
		\begin{align*}
			\psi\bigg(\frac{1}{8} \sum_t \bar{v}_t\bigg) ~\le~ \sum_t L_{\gamma_t}(\check{y}_t\,,\bar{v}_t) - \frac{1}{2p} \cdot \max_t \psi^*(\bar{y}_t) + \frac{3}{2} \psi(p \ones).
		\end{align*}
		Moreover, if the function $\psi$ is separable, the second term in the right-hand side can be replaced by $\frac{1}{2p} \psi^*(\bigvee_t \bar{y}_t)$.	
	\end{lemma}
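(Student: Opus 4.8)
The plan is to mimic the proof of Corollary~\ref{cor:cor} verbatim, but replacing the single global multiplier $\frac{1}{n}$ by the time-varying $\gamma_t$, and — crucially — arguing that this change does \emph{not} affect the iterates. First I would observe that because $\psi$ is positively homogeneous, its gradient $\nabla\psi$ is homogeneous of degree $p-1$ (the lemma just recalled), so from the closed form \eqref{eq:OCOn} the iterate $\check y_t$ produced by \OCOalg on the functions $L_{\gamma_t}(\cdot,\bar v_t)$ differs from the iterate $\bar y_t$ produced on the functions $L_{1/n}(\cdot,\bar v_t)$ only by a scalar: writing $\check y_t = \nabla\psi\big(\tfrac{4p\ones + \bar v_{1:t-1}}{4(1+\gamma_{1:t-1}+\bar\gamma)}\big)$ and $\bar y_t = \nabla\psi\big(\tfrac{4p\ones + \bar v_{1:t-1}}{4(1+\tfrac{t-1}{n}+\bar\gamma)}\big)$, we have $\check y_t = \lambda_t^{\,p-1}\,\bar y_t$ for the scalar $\lambda_t := \tfrac{1+\tfrac{t-1}{n}+\bar\gamma}{1+\gamma_{1:t-1}+\bar\gamma}>0$. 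Since a best response to a linear functional is invariant under positive scaling of that functional, $\argmax_{x}(c_t x - L_{\gamma_t}(\check y_t, a_t x))$ and $\argmax_x(c_t x - L_{1/n}(\bar y_t, a_t x))$ give the same $\bar v_t$ (and likewise for the \ocp best response $\bar v_t = \argmin_{v\in V_t}\ip{\bar y_t}{v}$, which does not even see the $\psi^*$ term). So the \emph{primal trajectory} $(\bar v_t)_t$ of Algorithm~\ref{alg:cvxProg} is literally the same under either Lagrangian, and in particular Theorem~\ref{thm:OCO} may be applied simultaneously to the run that produced $(\bar y_t)_t$ and to the run that produced $(\check y_t)_t$, both over the identical input sequence $(\bar v_t)_t$.

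Given this, the rest is a repeat of Corollary~\ref{cor:cor}. Apply Item~1 of Theorem~\ref{thm:OCO} to the $\check y_t$-run with multipliers $\gamma_t\in\{0,\bar\gamma\}$ (here I need $\bar\gamma = \tfrac{1}{|Stoch|}\le \tfrac{1}{4p}$, i.e.\ $|Stoch|\ge 4p$, which is exactly the restriction flagged in the margin note of Theorem~\ref{thm:main}): this gives
\begin{align*}
\sum_t L_{\gamma_t}\big(\check y_t,\tfrac12 \bar v_t\big) ~\ge~ \psi\Big(\tfrac18 \sum_t \bar v_t\Big) - \psi(p\ones).
\end{align*}
Then apply Item~2 of Theorem~\ref{thm:OCO} to the \emph{$\bar y_t$-run} (same input $(\bar v_t)_t$), giving $\tfrac1p\max_t \psi^*(\bar y_t)\le \sum_t \ip{\bar y_t}{\bar v_t} + \psi(p\ones)$. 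Using the identity $L_{\gamma_t}(\check y_t,\tfrac12\bar v_t) = L_{\gamma_t}(\check y_t,\bar v_t) - \tfrac12\ip{\check y_t}{\bar v_t}$, adding the regret bound and half of the size-control bound, and rearranging, yields exactly
\begin{align*}
\psi\Big(\tfrac18 \sum_t \bar v_t\Big) ~\le~ \sum_t L_{\gamma_t}(\check y_t,\bar v_t) - \tfrac{1}{2p}\max_t \psi^*(\bar y_t) + \tfrac32\psi(p\ones),
\end{align*}
as claimed. The separable case is identical, using Item~3 of Theorem~\ref{thm:OCO} in place of Item~2.

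The one genuinely delicate point — and the main obstacle — is the bookkeeping in the mixed term: the size-control correction is expressed in terms of $\psi^*(\bar y_t)$ (the iterates of the \emph{original} Lagrangian), while the fake cost on the left uses $L_{\gamma_t}(\check y_t,\cdot)$ (the iterates of the \emph{modified} one). This is deliberate: later, in Lemma~\ref{lemma:stoch} applied to the modified Lagrangian, the term $-\tfrac1{2p}\max_t\psi^*(\bar y_t)$ will be cancelled against the adversarial bound of Lemma~\ref{lemma:seqAdv} (which is also stated in terms of $\bar y_t$), so I should not try to rewrite it as $\psi^*(\check y_t)$. I therefore need to be careful to keep $\psi^*(\bar y_t)$ and $\psi^*(\check y_t)$ notationally distinct throughout, and to invoke Item~2/3 of Theorem~\ref{thm:OCO} on the correct run. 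A secondary subtlety is simply making the homogeneity-invariance argument clean: it rests on the closed-form expression \eqref{eq:OCOn} for the \OCOalg iterate, so I must recall that formula explicitly and note that the only difference between the two runs is the denominator $4(1+\gamma_{1:t-1}+\bar\gamma)$ versus $4(1+\tfrac{t-1}{n}+\bar\gamma)$ — a scalar — before invoking homogeneity of $\nabla\psi$.
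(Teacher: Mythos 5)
Your overall plan---run \OCOalg on the modified Lagrangians $L_{\gamma_t}(\cdot,\bar v_t)$ and repeat the argument of Corollary~\ref{cor:cor}---is exactly the paper's (implicit) route, and your preamble is fine: the closed form \eqref{eq:OCOn} plus homogeneity of $\nabla\psi$ gives $\check y_t=\lambda_t^{p-1}\bar y_t$, and the requirement $\bar\gamma=\tfrac1{|Stoch|}\le\tfrac1{4p}$ is the right caveat (the scaling argument is really what the \emph{next} lemma, on best responses, needs, but stating it here is harmless). The genuine gap is in your final combination step, where you take Item~1 of Theorem~\ref{thm:OCO} from the $\check y$-run but Item~2/3 from the $\bar y$-run and claim the sum ``yields exactly'' the stated bound. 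It does not: the regret identity produces the term $-\tfrac12\sum_t\ip{\check y_t}{\bar v_t}$, while your size-control bound supplies $+\tfrac12\sum_t\ip{\bar y_t}{\bar v_t}$, and these do not cancel. Indeed $\check y_t=\lambda_t^{p-1}\bar y_t$ with $\lambda_t=\frac{1+t/n}{1+\gamma_{1:t-1}+\bar\gamma}$, and $\lambda_t<1$ whenever the stochastic times are dense early on (already $t=1$ gives $\lambda_1=\frac{1+1/n}{1+1/|Stoch|}<1$ unless $Stoch=[n]$; with $Stoch=\{1,\dots,n/2\}$ one has $\check y_t\le\bar y_t$ for \emph{all} $t$), so the leftover $\tfrac12\sum_t\big(\ip{\bar y_t}{\bar v_t}-\ip{\check y_t}{\bar v_t}\big)$ is an uncontrolled nonnegative error; the inequality $\tfrac1p\max_t\psi^*(\bar y_t)\le\sum_t\ip{\check y_t}{\bar v_t}+\psi(p\ones)$ that your rearrangement actually needs is implied by neither application of Item~2.

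The intended proof stays within a single run: apply Items~1 and~2 (resp.~3) of Theorem~\ref{thm:OCO} \emph{both} to the modified run, exactly as in Corollary~\ref{cor:cor}, using $L_{\gamma_t}(\check y_t,\tfrac12\bar v_t)+\tfrac12\ip{\check y_t}{\bar v_t}=L_{\gamma_t}(\check y_t,\bar v_t)$; this gives the bound with correction term $\tfrac1{2p}\max_t\psi^*(\check y_t)$ (resp.\ $\tfrac1{2p}\psi^*(\bigvee_t\check y_t)$), and the ``$\bar y_t$'' in the displayed statement is best read as these modified iterates. Your closing rationale for mixing the two runs is also backwards: in the homogeneous case the adversarial bound of Lemma~\ref{lemma:seqAdv} is applied to the $\check y_t$'s (the bracketed term in the appendix contains $L_{\gamma_t}(\check y_t,v^*_t)$), so the eventual cancellation happens among $\check y$-quantities, and keeping $\max_t\psi^*(\bar y_t)$ in the correction would break that cancellation rather than enable it.
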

	
	Moreover, using homogeneity, we claim that $\bar{v}_t$ is still a best response with respect to $L_{\gamma_t}(\check{y}_t\,,\cdot)$, so in particular:
	
	\begin{lemma}
		For all $t$, $L_{\gamma_t}(\check{y}_t\,,\bar{v}_t) \le L_{\gamma_t}(\check{y}_t\,,v_t^*)$.
	\end{lemma}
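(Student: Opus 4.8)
The plan is to unwind the best-response definition of $\bar{x}_t$ and exploit the positive homogeneity of $\psi$ (hence of $\nabla\psi$ and of $\psi^*$) to show that the virtual play $\bar{x}_t$ computed by Algorithm~\ref{alg:welfare} using the original Lagrangian $L = L_{1/n}$ does not actually change if we instead use the modified Lagrangian $L_{\gamma_t}$ with $\gamma_t\in\{0,\tfrac{1}{|Stoch|}\}$ — and then the best-response inequality with respect to $L_{\gamma_t}$ is immediate from the definition of best response. Concretely, first I would recall from \eqref{eq:OCOn} that every iterate produced by \OCOalg has the form $\bar{y}_t = \nabla\psi(\bar w_t)$, where $\bar w_t$ is a (scalar multiple times) an affine combination of $4p\ones$ and the partial sums $v_{1:t-1}$; changing the multiplier sequence from the constant $\tfrac1n$ to the $\gamma_t$'s only changes the scalar denominator $4(1+\gamma_{1:t-1}+\bar\gamma)$ to a different positive scalar $4(1+\check\gamma_{1:t-1}+\bar\gamma)$. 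Hence $\check y_t = \nabla\psi(\lambda_t \bar w_t)$ for some scalar $\lambda_t>0$ depending only on the history, and by homogeneity of $\nabla\psi$ (degree $p-1$, the lemma just quoted) we get $\check y_t = \lambda_t^{\,p-1}\,\bar y_t$; i.e., $\check y_t$ is just a positive rescaling of $\bar y_t$.

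Next I would check that rescaling the dual does not change the best response $\bar x_t$. The best response is $\bar x_t = \argmax_{x\in[0,1]}\big(c_t x - \langle \bar y_t, a_t x\rangle + \gamma_t \psi^*(\bar y_t)\big)$; the additive term $\gamma_t\psi^*(\bar y_t)$ is independent of $x$, so $\bar x_t = \argmax_{x\in[0,1]}\big((c_t - \langle \bar y_t,a_t\rangle)\,x\big)$, which depends on the dual only through the sign of $c_t - \langle \bar y_t, a_t\rangle$. Since $\check y_t$ is a positive multiple of $\bar y_t$, this sign — and hence the argmax — is unchanged, so the virtual play is the same whether we run \OCOalg on $L$ or on $L_{\gamma_t}$. (One should note that $\bar v_t = a_t\bar x_t$ is then also the same, so the inductive consistency of the two runs is maintained: if the histories fed to \OCOalg agree up to time $t-1$, the iterate scalings $\lambda_t$ are well-defined and the plays at time $t$ agree.) With this established, the claimed inequality $L_{\gamma_t}(\check y_t,\bar v_t)\le L_{\gamma_t}(\check y_t, v_t^*)$ is nothing more than the statement that $\bar x_t$ maximizes $c_t x - L_{\gamma_t}(\check y_t, a_t x)$ over $x\in[0,1]$ while $\tfrac{x_t^*}{\beta}\in[0,1]$ is a feasible candidate (for a stochastic time; for an adversarial time one compares against $x=0$ as in the body), rearranged using $\bar v_t = a_t\bar x_t$, $v_t^* = a_t x_t^*$ — exactly the argument already used in Section~\ref{sec:nonsep}'s welfare analysis but now with the multiplier $\gamma_t$ in place of $\tfrac1n$.

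The main obstacle is purely bookkeeping rather than conceptual: making precise the induction that the two runs of \OCOalg (with multipliers $\tfrac1n$ versus $\gamma_t$) stay coupled step by step, so that the scalar $\lambda_t$ relating $\check w_t$ to $\bar w_t$ is legitimately a function of the common history and not a circular object. One has to observe that the numerator $4p\ones + v_{1:t-1}$ entering \eqref{eq:OCOn} is the same in both runs precisely because the virtual plays $\bar v_{1},\dots,\bar v_{t-1}$ coincide by the inductive hypothesis, and only the denominator differs; then the homogeneity step closes the induction. Care is also needed that $c_t$ can be negative, so the argmax over $[0,1]$ is $0$, $1$, or arbitrary on a tie, but in all cases it is determined by $\mathrm{sign}(c_t - \langle \bar y_t, a_t\rangle)$ alone, which the rescaling preserves. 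Everything else is routine.
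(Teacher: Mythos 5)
Your central idea---that by homogeneity of $\nabla\psi$ the modified-Lagrangian iterates satisfy $\check{y}_t=\alpha_t\,\bar{y}_t$ for a positive scalar $\alpha_t$ (both are gradients of $\psi$ at points with the same numerator $4p\ones+v_{1:t-1}$ and different positive scalar denominators), and that such a rescaling leaves the primal best response unchanged---is exactly the mechanism the paper uses. But you have attached the lemma to the wrong algorithm. This lemma lives in the appendix proving the homogeneous case of Theorem \ref{thm:main}, i.e.\ it concerns \ocp and Algorithm \ref{alg:cvxProg}: there $\bar{v}_t=\argmin_{v\in V_t} L(\bar{y}_t,v)$, the objective restricted to $V_t$ is the purely linear function $\ip{\bar{y}_t}{\cdot}$ plus a $v$-independent constant, so multiplying the dual by $\alpha_t>0$ and changing the constant from $\tfrac1n\psi^*(\bar y_t)$ to $\gamma_t\psi^*(\check y_t)$ manifestly preserves the minimizer, and the inequality follows because $v^*_t\in V_t$ is a feasible candidate---for every $t$, adversarial or stochastic, with no $\beta$-scaling and no comparison to $0$. (Also, no coupling induction between two runs of \OCOalg is needed: the paper defines $\check{y}_t$ by feeding the \emph{actual} plays $\bar{v}_1,\dots,\bar{v}_{t-1}$ of Algorithm \ref{alg:cvxProg} to \OCOalg with the modified multipliers, so the numerators agree by construction.)

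In the \welf setting you argue in, the decisive step is in fact false: the best response of Algorithm \ref{alg:welfare} is governed by the sign of $c_t-\ip{\bar{y}_t}{a_t}$, and this sign is \emph{not} invariant under $\bar{y}_t\mapsto\alpha_t\bar{y}_t$ when $\alpha_t\neq 1$ (e.g.\ $c_t=1$, $\ip{\bar{y}_t}{a_t}=\tfrac12$, $\alpha_t=3$ flips it). The additive reward term $c_t x_t$ breaks the scale-invariance that makes the argument work for \ocp, which is consistent with the paper claiming the homogeneous improvement only for Theorem \ref{thm:main} and not for Theorem \ref{thm:welfare}. So the proposal, as written, both misidentifies the setting of the statement and, in that setting, rests on a sign-preservation claim that fails; restated for Algorithm \ref{alg:cvxProg} it collapses to the paper's own two-line argument.
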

	
	\begin{proof}
		Since $\bar{y}_t$ corresponds to the Lagrangians where the $\gamma$'s (and $\bar{\gamma}$) are all equal to $\frac{1}{n}$, from equation \eqref{eq:OCOn} we see that 
		\begin{align*}
			\bar{y}_t &= \nabla \psi\bigg(\frac{4p\ones + v_{1:t-1}}{4(1 + \frac{t}{n})}\bigg)\\
			\check{y}_t &= \nabla \psi\bigg(\frac{4p\ones + v_{1:t-1}}{4(1 + \gamma_{1:t-1} + \bar{\gamma})}\bigg). 
		\end{align*}
		Since $\nabla \psi$ is homogeneous (because we assumed $\psi$ is so) we see that $\bar{y}_t$ and $\check{y}_t$ only differ by a non-negative scaling; that is, there is $\alpha_t \ge 0$ such that $\check{y}_t = \alpha_t \bar{y}_t$. Thus, since $\bar{v}_t$ is the minimizer over $V_t$ of the function $$L(\bar{y}_t, \cdot) = \ip{\bar{y}_t}{\cdot} - \frac{1}{n} \psi^*(\bar{y}_t),$$ it is clear that it also the minimizer of the function $$L_{\gamma_t}(\check{y}_t, \cdot) = \alpha_t \ip{\bar{y}_t}{\cdot} - \gamma_t \psi^*(\check{y}_t),$$ which gives the result. 
	\end{proof}
	
	In addition, the same argument as in Lemma \ref{lemma:stoch} allows us to bound the cost $L_{\gamma_t}(\check{y}_t\,,v_t^*)$, but now even more effectively because of the setting of $\gamma_t$. 
	
	\begin{lemma}
		We have
		\begin{align*}
			\E \sum_{t \in Stoch} L_{\gamma_t}(\check{y}_t, v^*_t) ~\le~ \psi\big(\E \vOPT_{Stoch}\big). 
		\end{align*}
	\end{lemma}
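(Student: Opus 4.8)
The plan is to reproduce the proof of Lemma~\ref{lemma:stoch} essentially verbatim; the only change is that the modified multipliers $\gamma_t = \frac{1}{|Stoch|}$ on the stochastic times are exactly tuned so that the factor $\beta = \frac{n}{|Stoch|}$ disappears. First I would fix a stochastic time $t$ and condition on the $\sigma$-algebra $\F_{t-1}$ generated by the stochastic feasible sets $(V_{t'})_{t' \in Stoch, t' \le t-1}$ seen up to the previous step. The vector $\check{y}_t$ is produced by \OCOalg from the virtual plays $\bar{v}_1,\ldots,\bar{v}_{t-1}$; moreover, as noted just before this lemma, $\check{y}_t$ is merely a non-negative rescaling $\alpha_t \bar{y}_t$ of the unmodified iterate, which by \eqref{eq:OCOn} is a deterministic function of $\bar{v}_1,\ldots,\bar{v}_{t-1}$. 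Combined with the non-adaptivity of the adversary, all the randomness in $\bar{v}_1,\ldots,\bar{v}_{t-1}$ comes from the stochastic $V_{t'}$'s, so $\check{y}_t$ is $\F_{t-1}$-measurable. Conditioning therefore fixes $\check{y}_t$ without altering the distribution of the independent draw $v^*_t = v^*(V_t)$, and $\E[v^*_t \mid \F_{t-1}] = \E v^*_t = \frac{1}{|Stoch|}\,\E \vOPT_{Stoch}$.

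With this in hand the computation is a one-line substitution relative to Lemma~\ref{lemma:stoch}: using $\gamma_t = \frac{1}{|Stoch|}$,
\begin{align*}
	\E\big[L_{\gamma_t}(\check{y}_t, v^*_t) \mid \F_{t-1}\big] &~=~ \ip{\check{y}_t}{\E v^*_t} - \gamma_t \psi^*(\check{y}_t) ~=~ \frac{1}{|Stoch|}\Big(\ip{\check{y}_t}{\E \vOPT_{Stoch}} - \psi^*(\check{y}_t)\Big) \\
	&~\le~ \frac{1}{|Stoch|}\,\psi\big(\E \vOPT_{Stoch}\big),
\end{align*}
where the inequality is the Fenchel bound \eqref{eq:fenchel} (equivalently \eqref{eq:fenchelIneq}) applied with the particular vector $\check{y}_t$. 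Taking total expectations and summing over the $|Stoch|$ stochastic steps, the prefactors $\frac{1}{|Stoch|}$ multiply out to $1$, giving $\E \sum_{t \in Stoch} L_{\gamma_t}(\check{y}_t, v^*_t) \le \psi(\E \vOPT_{Stoch})$, as claimed.

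The only point that requires any care — and the thing I would flag as the main (still mild) obstacle — is the measurability claim, i.e.\ that $\check{y}_t$ depends only on the history up to time $t-1$; this is precisely where we lean on the structural identity $\check{y}_t = \alpha_t \bar{y}_t$ and on the adversary being non-adaptive. Everything else is identical to Lemma~\ref{lemma:stoch}, and the improvement over that lemma is pure bookkeeping: replacing the constant multiplier $\frac{1}{n}$ by $\frac{1}{|Stoch|}$ cancels the $\beta$ that otherwise appeared both inside and outside $\psi$, which is exactly the point of running \OCOalg on the modified Lagrangians in the homogeneous case.
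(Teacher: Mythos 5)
Your proposal is correct and matches the paper's intended argument: the paper proves this lemma by pointing back to Lemma~\ref{lemma:stoch} and noting that with $\gamma_t = \frac{1}{|Stoch|}$ the factor $\beta = \frac{n}{|Stoch|}$ disappears, which is exactly the conditioning-plus-Fenchel computation you carry out. Your extra care about the $\F_{t-1}$-measurability of $\check{y}_t$ (via $\check{y}_t = \alpha_t \bar{y}_t$ and the non-adaptive adversary) is a point the paper leaves implicit, but it is consistent with, not different from, the paper's route.
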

	
	Putting these lemmas together gives 
	\begin{align*}
		\frac{1}{8^p} \E \psi\bigg(\sum_t \bar{v}_t\bigg) \le \E \psi\bigg(\frac{1}{8} \sum_t \bar{v}_t\bigg) ~\le~ \bigg[\sum_{t \in Adv} L_{\gamma_t}(\check{y}_t\,,v^*_t)  - \frac{1}{2p} \cdot \max_t \psi^*(\bar{y}_t)\bigg] + \E \psi\big(\vOPT_{Stoch}\big) + \frac{3}{2} \psi(p \ones).
	\end{align*}
	
	The term in brackets, relative to the cost in the adversarial part of the instance, can be upper bounded in the non-separable/separable cases exactly as in Sections \ref{sec:nonsep} and \ref{sec:separable}, giving 
	\begin{align}
		\frac{1}{8^p} \E \psi\bigg(\sum_t \bar{v}_t\bigg) ~\le~ \bigg[\blue{O(p)^{2p} \textrm{ \textbf{or} } O(p)^p} \bigg] \cdot \psi \big(\vOPT_{Adv}\big)  +  \psi\big(\E \vOPT_{Stoch}\big) + \frac{3}{2} \psi(p \ones),  \label{eq:wrapHomo}
	\end{align}	
	the `\blue{\textbf{or}}' depending on the non-separable/separable case, respectively. 

	 This concludes the proof of Theorem \ref{thm:main} in the case where $\psi$ is homogeneous. 
	
\end{document}